\newtheorem{Thm}{Theorem}}
\newtheorem{Lem}{Lemma}
\newtheorem{Def}{Definition}
\newtheorem{Remark}{Remark}
\DeclareMathOperator*{\argmax}{arg\,max}
\DeclareMathOperator{\diag}{diag}
\DeclareMathOperator{\logdet}{log\,det}
\newcommand{\half}{\ensuremath{\frac{1}{2}}}
\DeclareMathOperator{\tr}{tr}
\DeclareMathOperator{\sinr}{SINR}
\DeclareMathOperator{\re}{Re}
\DeclareMathOperator{\im}{Im}
\DeclareMathOperator{\tri}{\triangle}
\title{Improper Gaussian Signaling on the Two-user SISO Interference Channel%
\thanks{Manuscript received Nov. 28, 2011; revised Mar. 2, 2012; accepted May 11, 2011. The associate editor coordinating the review of this paper and approving it for publication was Prof. Murat Torlak.}
\thanks{This work has been performed in the framework of the European research
project SAPHYRE, which is partly funded by the European Union under 
its FP7 ICT Objective 1.1 - The Network of the Future. A conference version of the results is published in \cite{Ho2011}.}
\thanks{Zuleita Ho and Eduard Jorswieck are with the Institut of Communication Technology,
Faculty of Electrical and Computer Engineering,
Dresden University of Technology, Germany.
(\{zuleita.ho, eduard.jorswieck\}@tu-dresden.de)}}
\author{Zuleita K.M. Ho \IEEEmembership{Member,~IEEE} and Eduard Jorswieck \IEEEmembership{Senior Member,~IEEE}}
\begin{document}

\maketitle
\begin{abstract}
On a single-input-single-out (SISO) interference channel (IC), 
conventional non-cooperative strategies encourage players selfishly maximizing their transmit data rates, neglecting the deficit of 
performance caused by and to other players. In the case of proper complex Gaussian noise, the maximum entropy theorem shows that the best-response strategy is to transmit with
proper signals (symmetric complex Gaussian symbols). However, such equilibrium leads to degrees-of-freedom zero due to the
saturation of interference. 

With improper signals (asymmetric complex Gaussian symbols), an extra freedom of optimization is available. In this paper, we study 
the impact of improper signaling on the 2-user SISO IC. We explore the
achievable rate region with non-cooperative strategies by computing a Nash equilibrium of a non-cooperative game with improper signaling.
Then, assuming cooperation between players, we study the achievable rate region of improper signals. 
We propose the usage of improper rank one signals for their simplicity and ease of implementation. Despite their simplicity, rank one signals achieve close to optimal sum rate 
compared to full rank improper signals. 
We characterize the Pareto boundary, the outer-boundary of the achievable rate region, of improper rank one signals with a single real-valued parameter; we compute the closed-form solution of the Pareto boundary with the non-zero-forcing strategies, the maximum sum rate point and the max-min fairness solution with zero-forcing strategies. 
Analysis on the extreme SNR regimes shows that proper signals maximize the wide-band slope of spectral efficiency whereas improper signals
optimize the high-SNR power offset.
\end{abstract}
\begin{keywords}
 asymmetric complex signaling, improper signaling, SISO, interference channel
\end{keywords}

\section{Introduction}
The characterization and computation of the capacity of the interference channel has been an intriguing and open problem. Although the exact capacity is not known even for the simplest form,
the 2-user SISO IC, inspiring approximations \cite{Tse2007,Etkin2008}, achievable rate regions \cite{Han1981, Sason2004} and capacity regions \cite{Carleial1975, Sato1981, Zhou2010a} and outerbounds are available \cite{ Kramer2004,Tuninetti2008,Motahari2009, Shang2009a}. There are several approaches to tackle this intriguing problem. We give some representable results in the following.

The capacity of the two-user SISO IC is only known for a certain range of channel parameters.
In the weak interference regime where the cross interference gain is much weaker than the direct channel gain, the sum rate capacity is achievable by treating interference as additive noise at the receiver \cite{Shang2009, Motahari2009}, whereas in the strong and very strong interference regime, the interference signal is decoded and subtracted before the desired signal is decoded without any interference \cite{Han1981, Carleial1975, Sato1981,Kramer2004}.  In the mixed interference regime, where one cross interference gain is stronger than direct channel gain and the other link is weaker, the sum rate capacity is shown to be attained by one user decoding interference and the other user treating interference as noise \cite{Motahari2009, Weng2008}.

The deterministic channel approach offers a good approximation of  the sum capacity of interference channel by modeling the input-output relationship of the channel  as a bit-shifting operation \cite{Bresler2008, Cadambe2008, Jafar2010}. The deterministic channel shifts the transmitted bits to a level determined by the signal-to-noise-level of the links. The bits received lower than the noise level is considered lost. At the receivers, the interference signals and target signals undergo a modulo-two addition. 

In the view of game theory, the two users in the SISO IC suffer from the conflict of resources such as power and frequency resources. Assuming no cooperation among
the users, the resource optimization problem can be modeled as a non-cooperative game in power optimization at the transmitters. It is well accounted for that the Nash equilibrium (NE) point, achieved by both users selfishly maximizing their transmission rates and ignoring the interference generated towards the other users, are \emph{not efficient} \cite{Larsson2009}.  In \cite{Leshem2006,Liu2010}, the two users in the SISO IC are allowed to cooperate, or bargain. The bargaining process converges to the  Nash bargaining solution (NBS) which is Pareto efficient. With bargaining, as a form of cooperation, the achievable rate region is enlarged compared to the non-cooperative region.

However, all of the above are limited to the conventional proper signaling. Proper Gaussian signals, or so-called symmetric complex Gaussian signals, have been widely used in communication systems due to its attractive properties.
One of the most important properties includes \emph{the maximum entropy theorem} which states that the differential entropy of 
a complex random variable or random processes with a given second moment is maximized if and only if
the random variable or random processes are proper Gaussian \cite{Neeser1993}.

\subsection{Improper signaling}
Proper Gaussian signals, or the so-called \emph{symmetric complex signaling}, are complex Gaussian scalar variables such that the real and imaginary part of the symbols have
equal power and are independent zero-mean Gaussian variates \cite{Neeser1993,Schreier2010}. If the real and imaginary parts of the complex Gaussian symbols either have unequal power or are correlated, then the symbols are called \emph{improper}. Improper signals have wide applications in signal processing and information theory in \cite{Adali2011,Taubock2010,Schreier2010}  and references therein. Improper signaling techniques are implemented in GSM \cite{Mostafa2003a, Ottersten2005} and 3GPP networks \cite{ST-NXPWirelessFranceCom-Research2009a,ST-NXPWirelessFranceCom-Research2009}.  
In the area of communication theory, with the assumption of proper signals, it is shown in \cite{Host-Madsen2005} 
that the degrees-of-freedom (DOF) of the two-users interference channel is one. The authors show that DOF of $K$-users interference channel
is at most $K/2$. It is shown in \cite{Jafar2009} that by
employing improper signals and the concept of interference alignment, the DOF 1.2 is achievable on a 3-user SISO interference
channel in \cite{Jafar2009} and a DOF of 1.5 on a system of three SISO interfering broadcast channel in \cite{Shin2010}. 

Although the DOF in \cite{Jafar2009}, representing the number of error-free data streams achievable and the slope 
of the sum rate curve in the high SNR regime, is an important measure of performance, a further analysis of the impacts of improper 
signaling on the achievable rate region and optimization of signals should be pursued in order to improve system efficiency, 
such as the NE, the max-min fair operating point and the maximum
sum rate point for finite SNR. In \cite{Park2010}, the max-min fairness solution is studied on the 2-user SISO IC with improper signaling. Illustrated by simulation, 
the max-min fairness solution is improved significantly by improper signaling in all SNR regimes. 
 
The goal of this paper is to examine and study the impact and optimal use of improper signaling in the scenario of two-user SISO IC. To this end, in Section \ref{sec:non_coop}, we study the non-cooperative scenario  where players are selfish and intend to improve their achievable rates by transmitting with improper signaling (which is a larger set and includes proper signaling). The non-cooperative solution depends on local channel state information (CSI) as there is no cooperation among players. The interesting result is that proper signaling is an equilibrium point.  However, this implies that one non-cooperative Nash equilibrium is always proper and thereby not efficient in the interference limited scenarios. 

This motivates the study of a cooperative scenario in Section \ref{sec:coop} in which users  transmit improper signaling to improve various utilities such as achievable rate region, max-min fairness solution and proportional fairness solution. In Thm. \ref{thm:non_zf} and \ref{thm:zf}, we characterize the Pareto boundary, maximum sum rate point and max-min fairness with simple rank one improper signaling. To operate on the Pareto boundary, the users are assumed to have some coordination, hence a cooperative scenario. As illustrated by simulations in Section \ref{sec:sim}, improper signaling improves these utilities in different SNR regimes.  In extreme SNR regimes, we show in Lem. 2 and 3 in Section \ref{sec:extreme_snr} that proper signaling maximizes the wide-band slope of the spectral efficiency whereas the improper signaling maximizes the high SNR power offset. This illustrates that proper signals are more preferable in the noise limited regime and improper signals are more preferable in the interference limited regime.

During the preparation of the final version of this paper, an iterative convex optimization is proposed to solve for the covariance matrices which attain the Pareto boundary of the 2-user SISO IC. Simulation results in \cite{Zeng2012} verify that our proposed MMSE method, which can obtained in closed form, attains the maximum sum rate point despite being rank one.

\subsection{Notations} The $\re(.), \im(.)$ are operators which
return the real and imaginary parts of a complex number. 
The null space of a matrix $\mathbf{A}$ is denoted as $\mathcal{N}(\mathbf{A})$ in which for any $\mathbf{x} \in \mathcal{N}(\mathbf{A})$, $\mathbf{A} \mathbf{x}=\mathbf{0}$\footnote{In this paper, we focus only on the column null space of $\mathbf{A}$. It must not be confused with the row null space of $\mathbf{A}$ which is a set of vectors $\mathbf{x}$ such that $\mathbf{x}^H \mathbf{A}=\mathbf{0}$.}.
The operator $\mathbb{E}(.)$ returns the expectation of a random variable. The quantity $j=\sqrt{-1}$. The function $\log(.)$ has base 2. The function $\ln(.)$ is 
the natural log.
The set $\mathbb{R}$ is the set of real numbers. The operator $\lambda(\mathbf{A})$ returns a vector of eigenvalues of the matrix $\mathbf{A}$.

\section{Channel Model}\label{sec:ch_model}

\tikzstyle{int}=[draw, fill=blue!20, minimum size=2cm]
\tikzstyle{init} = [pin edge={to-,thin,black}]
\begin{figure}
\begin{center}
\resizebox{0.5 \columnwidth}{!}{
\begin{tikzpicture}[node distance=5cm,auto,>=latex']
\node (x1) at (0,3) {$x_1$};
\node (x2) at (0,0) {$x_2$};

\node (y1) at (5,3) {$y_1$};
\node (y2) at (5,0) {$y_2$};

\draw (x1) to node {1} (y1);
\draw (x1) to  (y2);
\draw (x2) to node {1} (y2);
\draw (x2) to  (y1);

\node at (3,2.2) {$h_{12}$};
\node at (3,0.9) {$h_{21}$};

\end{tikzpicture}
}
\caption{The channel model of a two-user SISO IC in the standard form.}
\label{fig:ch_model}
\end{center}
\end{figure}
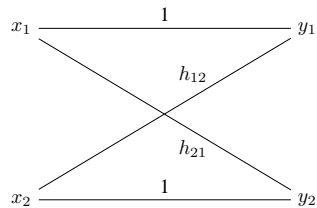

As shown in Fig. \ref{fig:ch_model}, the input-output relationship  of a two-user SISO-IC in the standard form \cite{Carleial1975} is, for $i,k=1,2$, $i \neq k$, given by
\begin{equation}\label{eqt:in-out}
  y_i=  x_i + h_{ik} x_k + n_i.
\end{equation} 
The interference channel coefficients $h_{ik}$ are deterministic complex scalars
 $h_{ik}= \sqrt{g_{ik}} e^{j \phi_{k}}$ with magnitude $\sqrt{g_{ik}}$ and phase $\phi_{ik}$.
The noise $n_i$ is a \emph{proper} complex Gaussian variable with zero mean and variance $N$ and the 
transmit symbols $x_1, x_2$ are complex Gaussian variables which may be \emph{improper}. Before we proceed, we need to formally describe 
propriety.

\begin{Def}[\cite{Adali2011,Taubock2010,Schreier2010}]\label{def:proper}
 A complex random variable $z=z_c + j z_s$ will be called \emph{proper} if its covariance matrix $\mathbf{Q}_z$ is a scaled identity matrix:
\begin{equation}
 \mathbf{Q}_z= \mathbb{E} \left[ \left[\begin{array}{c}
                                   z_c\\
				   z_s
                                  \end{array} \right] \left[ z_c, z_s\right]
 \right].
\end{equation} 
\end{Def}
Note that the covariance matrix  $\mathbf{Q}_z$ is a scaled identity matrix if and only if the power of $z_c$ and $z_s$ are the same and the correlation between $z_c, z_s$ are zero.
By Def. \ref{def:proper}, the transmit symbols $x_i, x_k$ have covariance matrix
\begin{equation}\label{eqt:tx_cov}
 \mathbf{Q}_i=\mathbb{E} \{ \mathbf{x}_i \mathbf{x}_i^T\} = \left[ \begin{array}{cc}
                                                        p_i & \alpha_i\\
\alpha_i & 1-p_i
                                                       \end{array}
\right] P= \mathbf{Q}_i(p_i, \alpha_i)
\end{equation}
with  $\mathbf{x}_i=[\re(x_i), \im(x_i)]^T$ and  transmit power $P$. For simplicity, we assume that the transmit power for both users are the same. The 
parameter $\alpha_i=\mathbb{E}(\re(x_i) \im(x_i))$, which is the correlation between the real and imaginary parts of $x_i$, takes
values between $-\sqrt{p_i(1-p_i)} \leq \alpha_i \leq \sqrt{p_i(1-p_i)}$ to ensure the positive semi-definiteness of the 
transmit covariance matrix.  The variable $x_i$ is proper if and only if $\alpha_i=0$ and $p_i=P/2$ \cite{Adali2011,Taubock2010,Schreier2010}. 

To compute the achievable rate using Shannon's formula, we proceed with the real-valued representation of \eqref{eqt:in-out}
\begin{equation}\label{eqt:in-out-real}
 \mathbf{y}_i= \mathbf{x}_i + \sqrt{g_{ik}} \mathbf{J}(\phi_{ik}) \mathbf{x}_k + \mathbf{n}_i
\end{equation} where $\mathbf{y}_i=[\re(y_i), \im(y_i)]^T$, $\mathbf{n}_i=[\re(n_i), \im(n_i)]^T$ and the channel matrix resembles a rotation matrix
, with $-\pi \leq \phi_{ik} \leq \pi$:
\begin{equation}
 \mathbf{J}(\phi_{ik})= \left[ \begin{array}{cc}
        \cos(\phi_{ik})  &  -\sin(\phi_{ik})\\
\sin(\phi_{ik}) & \cos(\phi_{ik})                         
                                \end{array}
\right].
\end{equation}
Eqt. \eqref{eqt:in-out-real} resembles the conventional input-output relationship on the multiple-input-multiple-output (MIMO)-IC. An achievable rate with 
\emph{real} transmit symbols and transmit covariance matrix $\mathbf{Q}_i \in \mathbb{R}^{2 \times 2}$  is \cite{Goldsmith2003}
\begin{equation}\label{eqt:rate_log_det}
\begin{aligned}
& R_i(\mathbf{Q}_1,\mathbf{Q}_2) = \half \logdet \left(\frac{N}{2}\mathbf{I} + \mathbf{Q}_i + g_{ik} \mathbf{J}(\phi_{ik}) \mathbf{Q}_k \mathbf{J}(\phi_{ik})^T \right)\\
&  - \half \logdet \left(\frac{N}{2}\mathbf{I} + g_{ik} \mathbf{J}(\phi_{ik}) \mathbf{Q}_k \mathbf{J}(\phi_{ik})^T\right)
\end{aligned}
\end{equation}
and the corresponding achievable rate region is $\mathcal{R}$
\begin{equation}\label{eqt:rate_region}
 \mathcal{R}= \bigcup_{\mathbf{Q}_1, \mathbf{Q}_2 \in \mathbb{S}} \bigg(R_1 \big(\mathbf{Q}_1,\mathbf{Q}_2 \big),R_2 \big(\mathbf{Q}_1,\mathbf{Q}_2 \big) \bigg)
\end{equation}
where 
 $\mathbb{S}=\left\{\mathbf{Q} \in \mathbb{R}^{2 \times 2}: \lambda(\mathbf{Q}) \geq 0, \tr \left\{ \mathbf{Q} \right\} \leq P \right\}$
denotes the set of two-by-two positive semi-definite matrices with power less than or equal to $P$.
Using \eqref{eqt:tx_cov}, we can write a more precise characterization of an achievable rate region $\mathcal{R}$:

\begin{equation}\label{eqt:rate_region}
\begin{aligned}
\mathcal{R}&= \bigcup_{(p_i,\alpha_i)\in \mathbb{A}_i} \Bigg(R_1 \bigg(\mathbf{Q}_1(p_1,\alpha_1),\mathbf{Q}_2(p_2,\alpha_2) \bigg), \\
& \hspace{2cm} R_2 \bigg(\mathbf{Q}_1(p_1,\alpha_1),\mathbf{Q}_2(p_2,\alpha_2) \bigg) \Bigg)
\end{aligned}
\end{equation} 
where for $i=1,2$,
 \begin{equation*}
\begin{aligned}
 \mathbb{A}_i &=\left\{ (p_i, \alpha_i) : 0 \leq p_i \leq 1,\right.\\
& \hspace{2cm} \left. -\sqrt{p_i(1-p_i)} \leq \alpha_i \leq \sqrt{p_i(1-p_i)} \right\}.
\end{aligned}
\end{equation*}
\begin{Def}
 The Pareto boundary $\mathcal{B}$ of rate region $\mathcal{R}$, 
achievable by covariance matrices $\mathbf{Q}_1, \mathbf{Q}_2 \in \mathbb{S}$,
 is defined as the boundary points of $\mathcal{R}$. If $(r_1,r_2) \in \mathcal{B}$, then there
does not exist a point $(r_1,r_2')$ or $(r_1',r_2)$ such that $r_1'>r_1$ or $r_2'>r_2$.
The rates are $r_i=R_i(\mathbf{Q}_1,\mathbf{Q}_2)$ and $r'_i=R_i(\mathbf{Q}'_1,\mathbf{Q}'_2)$ for some
$\mathbf{Q}_i,\mathbf{Q}'_i \in \mathbb{S}$.
Consequently, it is impossible to increase one user's rate without decreasing the others.
\end{Def}

In the following sections, we study various non-cooperative and cooperative strategies in the achievable rate region $\mathcal{R}$.

\section{Non-cooperative Solution}\label{sec:non_coop}
With no cooperation between users, each user maximizes its own transmit data rate, neglecting the 
possible deficit of the other user's performance. The receivers are assumed to have channel state information of the channel from its corresponding transmitter in order to convert any interference channel to the standard form of the channel input-output relationship as in \eqref{eqt:in-out}. As the direct channel gain is normalized to one and the transmitters are not interested in interference management, no channel state information is required at the transmitters. 
The behavior of this setting is best formulated 
as a non-cooperative game:
\begin{Def}
 The two-user SISO IC non-cooperative game $\mathcal{G}$ with improper signaling is defined as a tuple 
$\mathcal{G}=(\mathbb{N}, \mathbb{Q}, \mathbb{U} ) 
$ where $\mathbb{N}=\{1,2 \}$ is the set of players; $\mathbb{Q}= \mathbb{S} \times \mathbb{S}$
 is the set of strategies and $\mathbb{U}=\{R_i, i\in \mathbb{N}\}$ is the set of utilities.
\end{Def}
\begin{Lem}\label{lem:nash}
One Nash Equilibrium of $\mathcal{G}$ is attained by the dominant strategies $\mathbf{Q}_i=\frac{P}{2} \mathbf{I}$, the proper Gaussian symbols.
\end{Lem}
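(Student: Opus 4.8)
The plan is to verify the Nash condition directly: I will show that if one user transmits the proper covariance $\frac{P}{2}\mathbf{I}$, then the \emph{unique} best response of the other user is again $\frac{P}{2}\mathbf{I}$, so that $(\frac{P}{2}\mathbf{I},\frac{P}{2}\mathbf{I})$ is a strict mutual best response and hence a Nash equilibrium. The first observation is that in the rate expression \eqref{eqt:rate_log_det} the second $\logdet$ term depends only on $\mathbf{Q}_k$; with $\mathbf{Q}_k$ held fixed, user $i$'s best-response problem reduces to
\[ \max_{\mathbf{Q}_i \in \mathbb{S}} \logdet(\mathbf{B}_i + \mathbf{Q}_i), \qquad \mathbf{B}_i := \tfrac{N}{2}\mathbf{I} + g_{ik}\mathbf{J}(\phi_{ik})\mathbf{Q}_k\mathbf{J}(\phi_{ik})^T, \]
i.e.\ maximizing the receive-covariance determinant against the interference-plus-noise $\mathbf{B}_i$.

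Next I would exploit the rotation-invariance of the proper covariance. Substituting $\mathbf{Q}_k=\frac{P}{2}\mathbf{I}$ and using that $\mathbf{J}(\phi_{ik})$ is orthogonal, $\mathbf{J}(\phi_{ik})\mathbf{J}(\phi_{ik})^T=\mathbf{I}$, the interference term collapses to $\frac{g_{ik}P}{2}\mathbf{I}$ for \emph{every} phase $\phi_{ik}$, so that $\mathbf{B}_i=b_i\mathbf{I}$ is a scaled identity with $b_i=\frac{N}{2}+\frac{g_{ik}P}{2}>0$. This is the crux: a proper interferer is seen as white in the real/imaginary plane regardless of the unknown channel phase, which is exactly why no transmit CSI is needed and why this connects to the maximum entropy theorem quoted in the introduction. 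Because $b_i\mathbf{I}$ commutes with $\mathbf{Q}_i$, the objective becomes $\logdet(b_i\mathbf{I}+\mathbf{Q}_i)=\log(b_i+\lambda_1)+\log(b_i+\lambda_2)$, where $\lambda_1,\lambda_2\ge 0$ are the eigenvalues of $\mathbf{Q}_i$, subject to $\lambda_1+\lambda_2\le P$.

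It then remains to solve this scalar eigenvalue program, which is where the only inequality is invoked and hence the main (though mild) obstacle. Since $b_i>0$ the objective is strictly increasing in each $\lambda_j$, so the power constraint is active, $\lambda_1+\lambda_2=P$; writing $(b_i+\lambda_1)(b_i+\lambda_2)=b_i^2+b_iP+\lambda_1\lambda_2$ with fixed sum, the AM--GM inequality (equivalently, strict concavity of $\lambda\mapsto\log(b_i+\lambda)$) shows the maximum is attained uniquely at $\lambda_1=\lambda_2=P/2$. A symmetric $2\times 2$ matrix with both eigenvalues equal to $P/2$ is necessarily $\frac{P}{2}\mathbf{I}$, so the best response is proper and unique. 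By the symmetry of the model under $i\leftrightarrow k$, the same argument yields user $k$'s best response, establishing the equilibrium. I would also flag in the write-up that the word ``dominant'' should be read as this self-consistency/best-response property rather than dominance in the strict game-theoretic sense: for a non-proper $\mathbf{Q}_k$ the matrix $\mathbf{B}_i$ is no longer a scaled identity and the determinant-maximizing $\mathbf{Q}_i$ is a water-filling solution that generally differs from $\frac{P}{2}\mathbf{I}$.
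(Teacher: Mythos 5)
Your proposal is correct and follows essentially the same route as the paper's Appendix~A: verify the Nash condition directly by observing that against $\mathbf{Q}_k=\frac{P}{2}\mathbf{I}$ the orthogonality of $\mathbf{J}(\phi_{ik})$ makes the interference-plus-noise covariance a scaled identity, whence the best response is again $\frac{P}{2}\mathbf{I}$. The paper merely asserts that final best-response step, while you supply the eigenvalue/AM--GM argument (with uniqueness) that justifies it, and your caveat that ``dominant'' must be read as mutual best response rather than strict game-theoretic dominance is accurate, since against improper $\mathbf{Q}_k$ the water-filling best response is generally improper.
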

\begin{proof}
See Appendix \ref{app:NE}.
\end{proof}

Lem. \ref{lem:nash} illustrates that even if the assumption of proper transmit signals is relaxed, proper Gaussian signaling remains as an equilibrium point. 
If one user employs proper signals, the other user, in order to maximize its data rate, is best to employ proper  signals also. Once both users employ 
proper signals, neither has an incentive to deviate from this operating point which prevents the system to potentially achieve a more efficient operating point.

By Lem. \ref{lem:nash}, one Nash Equilibrium of $\mathcal{G}$ is given by
\begin{equation}\label{eqt:nash}
 R^{NE}=\left( R_1\left(\frac{P}{2}\mathbf{I},\frac{P}{2}\mathbf{I}\right),R_2\left(\frac{P}{2}\mathbf{I},\frac{P}{2}\mathbf{I}\right) \right).
\end{equation}
The single user points, achieved by proper signaling, are $R^{SU1}=\left( \log(1+\frac{P}{N}),0\right)$ and $R^{SU2}=\left(0, \log(1+\frac{P}{N})\right)$. Note that all the aforementioned operating points with proper signaling are in the achievable rate region $\mathcal{R}$ defined in \eqref{eqt:rate_region}: $R^{NE}, R^{SU1}, R^{SU2} \in \mathcal{R}$. 

However, the non-cooperative equilibrium operating point is in general not efficient, especially in the high SNR regime. The transmit strategy 
at the NE contributes strong interference to each user and saturates the sum rate performance. To improve the efficiency of the transmit strategies, we study the performance of cooperative solutions, allowing the transmitters to optimize the transmit
covariance matrices with the assumption of improper signaling.
It can be shown that  simple improper transmit strategies can restore the DOF and 
contribute to a significantly larger achievable rate region compared to  proper signaling. 

\section{Cooperative Solutions}\label{sec:coop}
To provide a better understanding of the impact of improper signaling, we study rate regions of various cooperative solutions. The transmitter-receiver pairs are willing to optimize a common metric or utility function. In particular,  both pairs would like to operate on the Pareto boundary. To this end, we compute the characterization of the Pareto boundary with single-value parameters which can be seen as a centralized approach. For decentralized implementation, to achieve operating points on the convex hull of the achievable rate region, one can adapt the methods similar to \cite{Shi2011}.
For the ease of analysis and illustration, we assume rank one transmit covariance matrices 
$ \mathbf{Q}_i=\mathbf{q}_i \mathbf{q}_i^T P$
where
$\mathbf{q}_i= [\cos(\tau_i), \sin(\tau_i)]^T, 0 \leq \tau_i \leq \pi$.
The input-output relationship from \eqref{eqt:in-out-real} is rewritten to the following:
\begin{equation}
 \begin{aligned}
  y_1 &= \mathbf{v}_1^T \mathbf{q}_1 d_1 + \sqrt{g_{12}}\mathbf{v}_1^T \mathbf{J}(\phi_{12}) \mathbf{q}_2 d_2 + \mathbf{v}_1^T \mathbf{n}_1\\
  y_2 &= \sqrt{g_{21}} \mathbf{v}_2^T \mathbf{J}(\phi_{21}) \mathbf{q}_1 d_1 + \mathbf{v}_2^T \mathbf{q}_2 d_2 + \mathbf{v}_2^T \mathbf{n}_2\\
 \end{aligned}
\end{equation}
The variables $d_i \in \mathbb{R}$ are the data symbols and the transmit symbols are $\mathbf{x}_i=\mathbf{q}_i d_i$ and $\mathbf{v}_i$ are the receive beamforming vectors.
An achievable rate region of rank-1 transmit covariance matrices is denoted as $\mathcal{R}^{one}$:
\begin{equation}\label{eqt:rank1_char}
\begin{aligned}
\mathcal{R}^{one}&=  \bigcup_{\substack{ i=1,2, \\ 0 \leq \tau_i \leq \pi}} \Bigg\{ \bigg(R_1 \big(\mathbf{q}_1\mathbf{q}_1^T P, \mathbf{q}_2 \mathbf{q}_2^T P \big), \\
& \hspace{3cm} R_2 \big(\mathbf{q}_1\mathbf{q}_1^T P,\mathbf{q}_2 \mathbf{q}_2^T P \big) \bigg) \Bigg\}
\end{aligned}
\end{equation}
with notation $\mathbf{q}_i=\mathbf{q}_i(\tau_i)$ and $\mathcal{R}^{one} \subseteq \mathcal{R}$.
We analyze the performance of the rate region with two separate cases: with ZF strategies $\mathcal{R}^{zf}$ and non-ZF strategies $\mathcal{R}^{nzf}$.
\begin{equation}
 \mathcal{R}^{one}= \mathcal{R}^{nzf} \cup \mathcal{R}^{zf}.
\end{equation}
The non-ZF strategies provide a more general setting in which MMSE receivers are employed and perform better than ZF strategies in finite SNR regimes. However, the ZF strategies provide
a simpler and easier implementation than non-ZF strategies as shown in later sections.

\subsection{Non-ZF strategies}\label{sec:non_zf}
In this subsection, we assume that the transmit and receive beamforming vectors do not jointly null out interference. The receivers 
are assumed to employ a minimum-mean-square-error receiver $\mathbf{v}_i$ which results in the Shannon rate formula.
Let $\mathbf{A}= \frac{N}{2}\mathbf{I}+  g_{ik} P \mathbf{J}(\phi_{ik}) \mathbf{q}_k \mathbf{q}_k^T \mathbf{J}(\phi_{ik})^T$.
Apply $\det(\mathbf{A}+\mathbf{u}\mathbf{v}^T)= (1+ \mathbf{v}^T \mathbf{A}^{-1} \mathbf{u}) \det(\mathbf{A})$. 
We rewrite the rate of user $i$ as 
$R_i= \half \log \left( 1+ \mathbf{q}_i^T \mathbf{A}^{-1} \mathbf{q}_i P\right).
$
Using the matrix inversion lemma on $\mathbf{A}$,
we obtain
\begin{equation}\label{eqt:rate_bf}
 R_i= \half \log \left( 1+ 2 \gamma - \frac{ 4 g_{ik}\gamma^2 \left(\mathbf{q}_k^T\mathbf{J}(\phi_{ik})^T \mathbf{q}_i \right)^2}{1+ 2 g_{ik} \gamma}\right)
\end{equation} 
where 
$ \gamma=\frac{P}{N}
$ is the signal-to-noise-ratio of both users.
\begin{Thm}\label{thm:non_zf}
The achievable rate region of non-ZF beamforming strategies can be completely characterized by a single real-valued parameter $\tri \tau$ as
\begin{equation}\label{eqt:ach_region_nzf}
 \mathcal{R}^{nzf}=\bigcup_{0 \leq \tri \tau \leq \pi} \left( R_1(\tri \tau), R_2(\tri \tau)\right).
\end{equation}
where $\tri \tau=\tau_2-\tau_1$ and $0 \leq \tri \tau \leq \pi$.
 For some achievable $\sinr_1$ attained by the transmit vector pair $(\mathbf{q}_1,\mathbf{q}_2)$, the $\sinr_2$  is function of $\sinr_1$ only:
\begin{equation}\label{eqt:sinrk}
\begin{aligned}
&\sinr_2(\sinr_1)\\
&=  2 \gamma- \frac{4 g_{21} \gamma^2}{ 1+2 g_{21} \gamma} \left( \sqrt{D} \cos(\bar{\phi})+ (-1)^b \sin(\bar{\phi}) \sqrt{1-D}\right)^2 
\end{aligned}
\end{equation}
where $ D=\frac{1}{4 g_{12} \gamma^2} \left( 2 \gamma -\sinr_1 \right) \left( 1+ 2 g_{12} \gamma\right)$, $\bar{\phi}=\phi_{12}+\phi_{21}$ and
$b$ is of binary values: 0 or 1.
The Pareto optimality is equivalent to maximizing the $\sinr_2(\sinr_1)$ in Eqt. \eqref{eqt:sinrk} given the value of $\sinr_1$.
 The optimum values of $b$ are:
\begin{equation}
b=\left\{ \begin{array}{cc}
   0 & \hspace{1cm} \pi/2 \leq \bar{\phi} \leq \pi, 3\pi/2 \leq \bar{\phi} \leq 2 \pi,\\
  1 & \hspace{1cm} 0 \leq \bar{\phi} \leq \pi/2, \pi \leq \bar{\phi} \leq  3\pi/2.
  \end{array}
\right.
\end{equation}
\end{Thm}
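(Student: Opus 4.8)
The plan is to collapse the two rate expressions in \eqref{eqt:rate_bf} onto the single scalar $\tri\tau=\tau_2-\tau_1$ and then eliminate $\tri\tau$ between them, so that $\sinr_2$ is pinned down by $\sinr_1$ up to one discrete degree of freedom.

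First I would evaluate the interference inner products. Because $\mathbf{J}(\phi)$ is a rotation, $\mathbf{J}(\phi_{ik})^T\mathbf{q}(\tau_i)=\mathbf{q}(\tau_i-\phi_{ik})$, and with the identity $\mathbf{q}(a)^T\mathbf{q}(b)=\cos(a-b)$ this gives $\mathbf{q}_2^T\mathbf{J}(\phi_{12})^T\mathbf{q}_1=\cos(\tri\tau+\phi_{12})$ and $\mathbf{q}_1^T\mathbf{J}(\phi_{21})^T\mathbf{q}_2=\cos(\tri\tau-\phi_{21})$. Substituting into \eqref{eqt:rate_bf} shows that both $\sinr_1$ and $\sinr_2$ depend on the transmit angles only through $\tri\tau$, which already proves the single-parameter characterization \eqref{eqt:ach_region_nzf}. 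Since $\cos^2(\cdot)$ has period $\pi$, the map $\tri\tau\mapsto(\sinr_1,\sinr_2)$ is $\pi$-periodic, so the restriction $0\leq\tri\tau\leq\pi$ already sweeps out the whole region $\mathcal{R}^{nzf}$.

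Next I would eliminate $\tri\tau$. Inverting the $\sinr_1$ relation yields $\cos^2(\tri\tau+\phi_{12})=D$ with $D$ as in the statement, which also fixes the feasible range of $\sinr_1$ through the constraint $D\in[0,1]$. Writing $\theta=\tri\tau+\phi_{12}$ and noting $\tri\tau-\phi_{21}=\theta-\bar{\phi}$ with $\bar{\phi}=\phi_{12}+\phi_{21}$, the angle-difference identity gives $\cos(\theta-\bar{\phi})=\cos\theta\cos\bar{\phi}+\sin\theta\sin\bar{\phi}$. Because only $\cos^2$ enters the rates, I may work modulo $\pi$ and take the representative with $\sin\theta=\sqrt{1-D}\geq 0$ and $\cos\theta=\pm\sqrt{D}$, so that $\cos(\theta-\bar{\phi})=\pm\sqrt{D}\cos\bar{\phi}+\sqrt{1-D}\sin\bar{\phi}$. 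Squaring, the two sign choices are exactly the $b=0$ and $b=1$ cases of \eqref{eqt:sinrk}. The delicate point here, and the step I expect to be the main obstacle, is this sign bookkeeping: one must check that as $\tri\tau$ runs over a length-$\pi$ interval each admissible $\sinr_1$ is produced by precisely the two branches $\cos\theta=\pm\sqrt{D}$, and that these two branches collapse, after squaring, to the single binary index $b$.

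Finally I would resolve the Pareto optimization. Since $R_i=\half\log(1+\sinr_i)$ is strictly increasing in $\sinr_i$, a point is Pareto optimal if and only if, for its value of $\sinr_1$, the quantity $\sinr_2$ is as large as possible, which is precisely the claimed equivalence between Pareto optimality and maximizing \eqref{eqt:sinrk}. Because the prefactor $\tfrac{4g_{21}\gamma^2}{1+2g_{21}\gamma}$ is positive, maximizing $\sinr_2$ amounts to minimizing the squared term over $b$. Setting $f(b)=\big(\sqrt{D}\cos\bar{\phi}+(-1)^b\sqrt{1-D}\sin\bar{\phi}\big)^2$, a short computation gives $f(0)-f(1)=4\sqrt{D(1-D)}\,\cos\bar{\phi}\sin\bar{\phi}=2\sqrt{D(1-D)}\sin(2\bar{\phi})$. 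As $\sqrt{D(1-D)}\geq 0$, the optimal $b$ is governed solely by the sign of $\sin(2\bar{\phi})$ and, crucially, is independent of $D$ and hence of $\sinr_1$: one takes $b=0$ when $\sin(2\bar{\phi})<0$ and $b=1$ when $\sin(2\bar{\phi})>0$. Translating these sign conditions into the quadrants of $\bar{\phi}$ reproduces the table in the statement, completing the proof.
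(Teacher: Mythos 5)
Your proposal is correct and follows essentially the same route as the paper's Appendix B: both show via the rotation inner products that the rates depend on the transmit angles only through $\cos^2(\tri\tau+\phi_{12})$ and $\cos^2(\tri\tau-\phi_{21})$ (whence the $\pi$-periodic single-parameter characterization), invert the $\sinr_1$ relation to get $D$, and substitute $\theta=\tri\tau+\phi_{12}$ into $\sinr_2$ with the angle-difference identity, the binary index $b$ tracking the two branches $\cos\theta=\pm\sqrt{D}$. Your explicit computation $f(0)-f(1)=2\sqrt{D(1-D)}\sin(2\bar{\phi})$, showing the optimal $b$ is independent of $D$ and matching the quadrant table, is actually a welcome addition, since the paper compresses this step into ``results follow by substituting trigonometric identities.''
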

\vspace{0.3cm}
\begin{proof}
See Appendix \ref{app:sinr_sinr}.
\end{proof}
Note that  for a positive real value of $\sinr_2$, the value of $D$ is required to be in the range $0 \leq D \leq 1$. This bound translates to
a bound of the value of $\sinr_1$: 
$\frac{2 \gamma}{1+ 2 g_{12} \gamma} \leq \sinr_1 \leq 2 \gamma$. Using the result from Thm. \ref{thm:non_zf}, the maximum achievable $\sinr_2$ can be computed in closed form given the target $\sinr_1$ and CSI.

\subsection{Zero-forcing strategies}\label{sec:zf}
In this section, we study an achievable region and the maximum sum rate point by ZF strategies. 
Instead of the MMSE receiver used in Section \ref{sec:non_zf}, a receiver $\mathbf{v}_i=\left[\cos(\theta_i), \sin(\theta_i)\right]^T$ is employed.

The ZF criteria is therefore, for $i=1,2$,
$\mathbf{v}_i^T \mathbf{J}(\phi_{ik}) \mathbf{q}_k=0
$ which can be simplified to
$\cos(-\theta_i+\phi_{ik}+\tau_k)=0
$. The ZF receiver  compensates the phase shift by the interference channel, 
\begin{equation}
 \theta_i=\phi_{ik}+\tau_k-\frac{\pi}{2}.
\end{equation}
With no interference and normalized noise power, the resulting SINR is the desired signal energy:
\begin{equation*}
\begin{aligned}
 \sinr_i&=|\mathbf{v}_i^T \mathbf{q}_i|^2 \gamma\\
&= \cos(\tau_i-\theta_i)^2 \gamma= \cos^2\left(\tau_i-\phi_{ik}-\tau_k+\frac{\pi}{2}\right) \gamma.
\end{aligned}
\end{equation*}
In the following, we obtain the characterization of $\mathcal{R}^{zf}$, the maximum sum rate point in $\mathcal{R}^{zf}$, $R^{zf}$, and the max-min fairness solution.
\begin{Thm}\label{thm:zf}
  The ZF rate region $\mathcal{R}^{zf}$ is parametrized by a single parameter $0 \leq \tri \tau \leq \pi$:
\begin{equation}
\begin{aligned}
 \mathcal{R}^{zf}& =\bigcup_{0 \leq \tri \tau \leq \pi} \bigg( \half \log(1+\sinr_1(\tri \tau)), \\
& \hspace{3cm} \half \log(1+\sinr_2(\tri \tau))\bigg)
\end{aligned}
\end{equation}
where 
$\sinr_1(\tri \tau)=\sin^2(\tri \tau +\phi_{12}) \gamma$ and $\sinr_2(\tri \tau)=\sin^2(\tri \tau - \phi_{21}) \gamma$.  
The maximum sum rate point $R^{zf}$ is attained by
\begin{equation}
\begin{aligned}
\tri \tau &= - \half \tri \phi, \frac{\pi}{2}- \half \tri \phi \\
& \hspace{1cm} \mbox{ or } \frac{1}{2} \cos^{-1} \left( \frac{\gamma+2}{\gamma} \cos(\bar{\phi})\right)- \frac{1}{2} \tri \phi.
\end{aligned}
\end{equation} The max-min fairness solution is attained by 
\begin{equation}
 \tri \tau= - \half \tri \phi, \frac{\pi}{2}- \half \tri \phi
\end{equation}
 with  $\tri \phi=\phi_{12}-\phi_{21}$.
\end{Thm}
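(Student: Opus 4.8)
The plan is to reduce everything to the two SINR expressions that were just derived, namely $\sinr_i = \cos^2(\tau_i - \phi_{ik} - \tau_k + \tfrac{\pi}{2})\,\gamma$, and then show that both SINRs depend on the beamforming angles only through the difference $\tri\tau = \tau_2 - \tau_1$. First I would rewrite $\cos(\cdot + \tfrac{\pi}{2}) = -\sin(\cdot)$ to bring each SINR into the stated sine form. For user $1$ the receiver nulls the interference from transmitter $2$, so $\sinr_1 = \sin^2(\tau_1 - \phi_{12} - \tau_2)\,\gamma = \sin^2(\tri\tau + \phi_{12})\,\gamma$ after using the evenness of $\sin^2$; symmetrically $\sinr_2 = \sin^2(\tri\tau - \phi_{21})\,\gamma$. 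This establishes that only $\tri\tau$ matters, which justifies the single-parameter union in the statement of $\mathcal{R}^{zf}$; the absolute angles $\tau_1,\tau_2$ are a redundant degree of freedom under ZF, exactly as in Thm.~\ref{thm:non_zf}.

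Next I would turn to the maximum sum rate point. Since $\tfrac{1}{2}\log(1+\sinr_1) + \tfrac{1}{2}\log(1+\sinr_2)$ is monotone in the product $(1+\sinr_1)(1+\sinr_2)$, I would maximize that product over $0 \leq \tri\tau \leq \pi$. Writing $s_1 = \sin^2(\tri\tau + \phi_{12})$ and $s_2 = \sin^2(\tri\tau - \phi_{21})$, I would differentiate the objective with respect to $\tri\tau$, set the derivative to zero, and use the double-angle identity $\sin^2\theta = \tfrac{1}{2}(1-\cos 2\theta)$ to convert the stationarity condition into a linear-trigonometric equation in $2\tri\tau$. The $\sin^2$-times-$\gamma$ structure produces cross terms whose derivative is a combination of $\sin(2\tri\tau + 2\phi_{12})$ and $\sin(2\tri\tau - 2\phi_{21})$, and applying sum-to-product formulas should collapse these into factors of the form $\sin(2\tri\tau + \tri\phi)\cos(\bar\phi)$ and $\sin(\cdots)$, with $\tri\phi = \phi_{12}-\phi_{21}$ and $\bar\phi = \phi_{12}+\phi_{21}$ as defined. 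The three candidate solutions in the theorem are the roots of this equation: the values $\tri\tau = -\tfrac{1}{2}\tri\phi$ and $\tfrac{\pi}{2} - \tfrac{1}{2}\tri\phi$ are the roots that make the paired sine argument vanish, while the third root $\tfrac{1}{2}\cos^{-1}\!\big(\tfrac{\gamma+2}{\gamma}\cos\bar\phi\big) - \tfrac{1}{2}\tri\phi$ comes from setting the remaining $\gamma$-dependent factor to zero.

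For the max-min fairness solution I would note that fairness seeks to equalize the two rates, hence $\sinr_1 = \sinr_2$, i.e. $\sin^2(\tri\tau + \phi_{12}) = \sin^2(\tri\tau - \phi_{21})$. Taking square roots and matching sine arguments up to sign and $\pi$-periodicity gives either $\tri\tau + \phi_{12} = \pm(\tri\tau - \phi_{21}) + k\pi$; the same-sign case forces $\phi_{12} = -\phi_{21}$ (degenerate), so the relevant case yields $2\tri\tau = -(\phi_{12}+\phi_{21}) + k\pi$, which reproduces $\tri\tau = -\tfrac{1}{2}\tri\phi$ and $\tfrac{\pi}{2} - \tfrac{1}{2}\tri\phi$ once one re-expresses things in terms of $\tri\phi$ — and I would verify the bookkeeping carefully here, since the max-min condition naturally involves $\bar\phi$ whereas the stated answer is in $\tri\phi$, so the reconciliation must use the $\sin^2$ symmetry. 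The main obstacle I anticipate is the stationarity algebra for $R^{zf}$: keeping track of which trigonometric factor carries the $\tfrac{\gamma+2}{\gamma}$ coefficient and confirming that the first two candidates are genuine maximizers (not the third for all $\gamma$) will require checking the second-order condition or comparing objective values at the candidates, and ensuring each root lies in $[0,\pi]$ after the affine shift by $-\tfrac{1}{2}\tri\phi$.
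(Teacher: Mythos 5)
Your proposal is correct and follows essentially the same route as the paper's proof: reduce both ZF SINRs to functions of $\tri \tau$ via $\cos(a+\pi/2)=-\sin(a)$, maximize the product $(1+\sinr_1)(1+\sinr_2)$ whose stationarity condition factors as $\sin(2\tri \tau+\tri \phi)\bigl((2+\gamma)\cos\bar{\phi}-\gamma\cos(2\tri \tau+\tri \phi)\bigr)=0$ yielding exactly the three stated candidates, and equalize the two rates for the max-min point, with the second-order check you anticipate matching the paper's treatment. Your only slip is in the max-min bookkeeping --- the opposite-sign case gives $2\tri \tau=-\tri \phi+k\pi$ directly (not $-\bar{\phi}+k\pi$, which belongs to the degenerate same-sign case), so no $\bar{\phi}$-to-$\tri \phi$ reconciliation is actually needed --- but you flagged this yourself and your final solutions agree with the paper's Appendix C.
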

\begin{proof}
 see Appendix \ref{app:lem_zf_rate}.
\end{proof}
\begin{Remark}
The maximum sum rate point and the max-min fairness solution assuming ZF solution are derived in closed form. Although ZF type solutions are suboptimal in finite SNR, they attain the maximum degree of freedom in high SNR regime. 
\end{Remark}
\begin{Remark}
 The regions $\mathcal{R}^{nzf}$ and $\mathcal{R}^{zf}$ are both characterized by a real-valued parameter $\tri \tau$, with range $0 \leq \tri \tau \leq \pi$ in Thm. 
 \ref{thm:non_zf} and \ref{thm:zf} respectively. Hence, we can characterize $\mathcal{R}^{one}$, which is the union of both, by the same parameter which simplifies the original characterization in Eqt. \eqref{eqt:rank1_char}.
\end{Remark}

In Section \ref{sec:coop}, we study the achievable rate region with improper rank one strategies whereas in Section \ref{sec:non_coop} we study the achievable rate region of proper signaling, including the NE. In the setting of proper signaling, NE outperforms other strategies in the noise limited regime whereas ZF strategies are optimal in the interference limited regime. The
intriguing problem is to examine whether the same analogy holds in the setting of improper signaling. 

\section{Optimal signaling in extreme SNR regimes}\label{sec:extreme_snr}
In this section, we study the performance of proper and improper signaling in extreme SNR regimes where the transmit power budgets of both users are either very small in the low SNR regime (noise limited) or very  large in the high SNR regime (interference limited).

\subsection{The low SNR regime}
The spectral efficiency is defined as the number of transmission bits per time and frequency channel use \cite{Verdu2002}, which has been widely used to analyze
the performance of different transmission strategies in the low SNR regimes. As pointed out in \cite{Verdu2002}, different transmit strategies converge
to the same $\left(\frac{E_b}{N_0} \right)_{min}$, minimum energy required for reliable data transmission, but give a different first-order growth,
\emph{the wide-band slope of the spectral efficiency}. 
\begin{equation}\label{eqt:s0}
 S_0= \frac{2 (\dot{R})^2}{- \ddot{R}} \frac{1}{10 \log_{10} 2}
\end{equation}
where $\dot{R}, \ddot{R}$ are the first and second derivatives of the sum rate function $R=R_1+R_2$ at $\gamma=0$.
Our goal is to maximize $S_0$ by varying $\mathbf{Q}_1,\mathbf{Q}_2$. 

\begin{Lem}\label{lem:s0}
 The $\left(\frac{E_b}{N_0} \right)_{min}$ is independent of $\mathbf{Q}_1, \mathbf{Q}_2$. The wide-band slope of spectral efficiency $S_0$ defined in Eqt. \eqref{eqt:s0} is maximized by scaled identity covariance matrices in the 
low SNR regime. In other words, proper Gaussian signals are optimal in terms of first-order growth of the spectral efficiency in the low SNR regime.
\end{Lem}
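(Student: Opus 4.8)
The plan is to use the wideband framework of Verd\'u: express the sum rate $R(\gamma)=R_1+R_2$ as a function of the scalar $\gamma$, Taylor-expand about $\gamma=0$ to read off $\dot R(0)$ and $\ddot R(0)$, and then optimize over the covariance matrices. To separate power from signalling structure I would write $\mathbf{Q}_i=P\,\bar{\mathbf{Q}}_i=\gamma N\,\bar{\mathbf{Q}}_i$ with $\tr\bar{\mathbf{Q}}_i=1$. Factoring $\tfrac{N}{2}\mathbf{I}$ out of both $\logdet$ terms in \eqref{eqt:rate_log_det} turns $R_i$ into $\half\logdet(\mathbf{I}+2\gamma\mathbf{S}_i)-\half\logdet(\mathbf{I}+2\gamma\mathbf{T}_i)$, where $\mathbf{T}_i=g_{ik}\mathbf{J}(\phi_{ik})\bar{\mathbf{Q}}_k\mathbf{J}(\phi_{ik})^T$ is the interference term, $\mathbf{S}_i=\bar{\mathbf{Q}}_i+\mathbf{T}_i$, and $\mathbf{S}_i-\mathbf{T}_i=\bar{\mathbf{Q}}_i$.

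For the first assertion I would expand $\logdet(\mathbf{I}+x\mathbf{M})=\tfrac{1}{\ln2}(x\tr\mathbf{M}-\tfrac{x^2}{2}\tr\mathbf{M}^2+O(x^3))$. The linear coefficient of $R_i$ is then $\tfrac{1}{\ln2}\tr\bar{\mathbf{Q}}_i$, so $\dot R(0)=\tfrac{1}{\ln2}(\tr\bar{\mathbf{Q}}_1+\tr\bar{\mathbf{Q}}_2)=\tfrac{2}{\ln2}$. Since $(E_b/N_0)_{min}$ depends on the covariance matrices only through $\dot R(0)=\lim_{\gamma\to0}R(\gamma)/\gamma$, and the latter depends only on the fixed traces, $(E_b/N_0)_{min}$ is independent of $\mathbf{Q}_1,\mathbf{Q}_2$.

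Next I would read off the curvature. The quadratic coefficient yields $\ddot R(0)=-\tfrac{2}{\ln2}f$, where $f=\sum_{i=1,2}(\tr\bar{\mathbf{Q}}_i^2+2g_{ik}\tr(\bar{\mathbf{Q}}_i\mathbf{J}(\phi_{ik})\bar{\mathbf{Q}}_k\mathbf{J}(\phi_{ik})^T))$ follows from $\tr\mathbf{S}_i^2-\tr\mathbf{T}_i^2=\tr\bar{\mathbf{Q}}_i^2+2\tr(\bar{\mathbf{Q}}_i\mathbf{T}_i)$ and the cyclic property of the trace. As $\dot R(0)$ is constant, maximizing $S_0$ in \eqref{eqt:s0} is equivalent to minimizing $-\ddot R(0)$, hence to minimizing $f$ over $\bar{\mathbf{Q}}_i\succeq0$ with $\tr\bar{\mathbf{Q}}_i=1$. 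The self-terms are easy: with eigenvalues summing to one, $\tr\bar{\mathbf{Q}}_i^2=\lambda_{i1}^2+\lambda_{i2}^2\ge\tfrac12$, with equality iff $\lambda_{i1}=\lambda_{i2}=\tfrac12$, i.e.\ iff $\bar{\mathbf{Q}}_i=\tfrac12\mathbf{I}$, the proper signal.

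The hard part will be the cross (interference) terms $\tr(\bar{\mathbf{Q}}_i\mathbf{J}(\phi_{ik})\bar{\mathbf{Q}}_k\mathbf{J}(\phi_{ik})^T)\ge0$, which are driven to zero by \emph{orthogonal} choices of $\bar{\mathbf{Q}}_1,\bar{\mathbf{Q}}_2$ --- exactly the configurations that make the self-terms largest. Thus $f$ trades off shrinking the self-terms (favoring $\tfrac12\mathbf{I}$) against shrinking the cross-terms (favoring orthogonal beams), and the claim is that $\tfrac12\mathbf{I}$ wins. I would establish this by first checking that $\bar{\mathbf{Q}}_i=\tfrac12\mathbf{I}$ is stationary for $f$ under the trace constraint (the first-order terms vanish because the feasible perturbations are traceless), and then studying the second variation: setting $\bar{\mathbf{Q}}_i=\tfrac12\mathbf{I}+\epsilon\mathbf{E}_i$ with $\mathbf{E}_i$ traceless symmetric, the quadratic change is $\tr\mathbf{E}_1^2+\tr\mathbf{E}_2^2+2g_{12}\tr(\mathbf{E}_1\mathbf{J}(\phi_{12})\mathbf{E}_2\mathbf{J}(\phi_{12})^T)+2g_{21}\tr(\mathbf{E}_2\mathbf{J}(\phi_{21})\mathbf{E}_1\mathbf{J}(\phi_{21})^T)$. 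Because conjugation by $\mathbf{J}(\phi)$ acts on the two-dimensional space of traceless symmetric matrices as a planar rotation by $2\phi$, positivity of this form reduces to the coupling condition $g_{12}^2+g_{21}^2+2g_{12}g_{21}\cos(2\bar{\phi})\le1$ with $\bar{\phi}=\phi_{12}+\phi_{21}$. I expect this coupling bound --- and the upgrade from a local to a global statement over the whole semidefinite set --- to be the decisive step, and the point at which any weak-interference normalization built into the standard channel form is needed.
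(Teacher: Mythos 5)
Up to the reduction step your argument is the paper's own proof (Appendix~D) in different clothes: the paper likewise normalizes $\bar{\mathbf{Q}}_i=\mathbf{Q}_i/P$, finds that $\dot R|_{\gamma=0}$ equals a constant independent of the covariances (whence $(E_b/N_0)_{min}$ is fixed), and reduces $\max S_0$ to minimizing exactly your $f$, cf.\ \eqref{eqt:s0_max}; your Taylor-coefficient bookkeeping via $\tr \mathbf{S}_i^2-\tr\mathbf{T}_i^2=\tr\bar{\mathbf{Q}}_i^2+2\tr(\bar{\mathbf{Q}}_i\mathbf{T}_i)$ is just a cleaner route to the same first and second derivatives. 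The endgames genuinely differ. The paper parametrizes $\bar{\mathbf{Q}}_i=\mathbf{U}_i\diag(\lambda_i,1-\lambda_i)\mathbf{U}_i^T$, absorbs the phases into two angles $\omega_1,\omega_2$, and sets $\partial f/\partial\lambda_i=0$; the stationarity equations \eqref{eqt:lambda} read $\lambda_i-\half=-c\,(\lambda_k-\half)$ with the \emph{common} coefficient $c=g_{12}\cos(2\omega_1)+g_{21}\cos(2\omega_2)$, so (for $c^2\neq 1$) the unique interior stationary point is $\lambda_1=\lambda_2=\half$, and the paper stops there: no second-order test, and no comparison with the boundary $\lambda_i\in\{0,1\}$, i.e.\ with rank-one signals.

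The gap in your proposal is that you never discharge the coupling condition --- and you cannot, because it is not implied by the model: the standard form \eqref{eqt:in-out} normalizes only the \emph{direct} gains, leaving $g_{12},g_{21}$ unbounded, so the ``weak-interference normalization built into the standard channel form'' you hope to invoke does not exist. Two things are worth realizing, though. First, the local-to-global upgrade you flag as hard is free: $f$ is exactly quadratic, so your expansion at $\bar{\mathbf{Q}}_i=\half\mathbf{I}+\epsilon\mathbf{E}_i$ has no remainder, and since $\half\mathbf{I}$ lies in the interior of the semidefinite constraint, positive semidefiniteness of your form on traceless pairs is \emph{equivalent} to global optimality of proper signals on the unit-trace feasible set; your rotation-by-$2\phi$ reduction and the resulting threshold $g_{12}^2+g_{21}^2+2g_{12}g_{21}\cos(2\bar{\phi})\le 1$ (the squared operator norm of $g_{12}\mathbf{J}(2\phi_{12})+g_{21}\mathbf{J}(-2\phi_{21})$ acting on that plane) are correct. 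Second, when the threshold is violated the lemma's unconditional conclusion itself fails, not merely your proof of it: take $g_{12}=g_{21}=g>\half$ and $\bar{\phi}=0$; then rank-one beams can zero both cross terms simultaneously, giving $f=2$, while proper signaling gives $f=1+g_{12}+g_{21}=1+2g>2$, so improper rank-one signals strictly increase $S_0$. So the step at which your route stalls exposes a real gap in the paper's own endgame, whose fixed-point reasoning certifies only an interior stationary point of $f$ in $(\lambda_1,\lambda_2)$; with the explicit caveat $g_{12}^2+g_{21}^2+2g_{12}g_{21}\cos(2\bar{\phi})\le 1$ added, your argument closes completely and is in fact sharper than the one in Appendix~D.
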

\begin{proof}
 See Appendix \ref{app:s0}.
\end{proof}
Lem. \ref{lem:s0} confirms our intuition: in the noise limited scenario, the sum of received interference and noise is dominated by the noise power. By assumption, the noise is proper and it results in the propriety of the optimal transmit strategy.

\subsection{The high SNR regime}
Although a proper  signal maximizes the wide-band slope of spectral efficiency, it does not
allow interference nulling. This leads to a high SNR slope of zero. 
It is the aim of this section to study the optimal transmit strategies in the high SNR regime. As studied in \cite{Lozano2005}, the maximum DOF, 
or the slope of the maximum sum rate versus SNR, $S_{\infty}(\mathbf{Q}_1,\mathbf{Q}_2)$, of the two-user SISO-IC is one. 
Hence, in the high SNR regime, a good transmit strategy must be able to null out interference.
For the ease of notation, denote $\mathbf{Q}=\left(\mathbf{Q}_1,\mathbf{Q}_2 \right)$. 
The high-SNR slope achieved by transmit covariance matrices $\mathbf{Q}$ in bits/sec/Hz/(3 dB) is defined as \cite{Lozano2005}
\begin{equation}
 S_{\infty}(\mathbf{Q})=\lim_{\gamma \rightarrow \infty} \frac{R_1(\mathbf{Q})+ R_2(\mathbf{Q})}{\log_2 \gamma}.
\end{equation}
The high-SNR power offset is, for $S_{\infty}>0$,
\begin{equation}
 L_{\infty}(\mathbf{Q})=\lim_{\gamma \rightarrow \infty} \left( \log_2 \gamma - \frac{R_1(\gamma, \mathbf{Q})+R_2(\gamma, \mathbf{Q})}{S_{\infty}(\mathbf{Q})}\right)
\end{equation}
where $R_i(\gamma, \mathbf{Q})$ is the rate achieved with SNR $\gamma$ and transmit covariance matrices $\mathbf{Q}$ in Eqt. \eqref{eqt:rate_log_det}. 

In the following, we compute the high-SNR slope and the power offset with rank one transmit covariance matrices:
\begin{Lem}\label{lem:power_offset}
 The high-SNR slope, DOF, of a two-user SISO-IC with transmit beamforming vectors $\mathbf{q}_1,\mathbf{q}_2$, i.e. $\mathbf{Q}_i=\mathbf{q}_i \mathbf{q}_i^T P$, is $S_{\infty}=1$. The high-SNR power offset $L_{\infty}$ is a function of $\tri \tau$ only, where $\mathbf{q}_i=[\cos(\tau_i), \sin(\tau_i)]^T$ 
and $\tri \tau=\tau_2-\tau_1$. The optimal high-SNR power offset is
\begin{equation}
 \max_{\tri \tau} L_{\infty}= \left\{ \begin{array}{cc}
                     -1 -\log \sin^2 (\half \bar{\phi}) & \mbox{ if } \cos(\bar{\phi})<0\\
		     -1 -\log \cos^2 (\half \bar{\phi}) & \mbox{ if } \cos(\bar{\phi})\geq 0
                    \end{array}
 \right.
\end{equation}
 where $\bar{\phi}=\phi_{12}+\phi_{21}$ and the optimal $\tri \tau$ is 
\begin{equation}
\tri \tau = \left\{ \begin{array}{cc}
                     -\half \tri \phi & \mbox{ if } \cos(\bar{\phi})<0\\
		     \frac{\pi}{2} -\half \tri \phi & \mbox{ if } \cos(\bar{\phi})\geq 0
                    \end{array}
 \right.
\end{equation}

\end{Lem}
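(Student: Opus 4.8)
The plan is to work directly from the beamforming SINR expression in Eqt.~\eqref{eqt:rate_bf}, extract its leading high-SNR behaviour to get $S_\infty$ and $L_\infty$ as functions of $\tri\tau$ alone, and then reduce the maximization of $L_\infty$ to a one-dimensional trigonometric problem. First I would evaluate the inner products in Eqt.~\eqref{eqt:rate_bf}: writing $\mathbf{q}_i=[\cos\tau_i,\sin\tau_i]^T$ and using $\mathbf{J}(\phi)^T=\mathbf{J}(-\phi)$, a direct computation gives $\mathbf{q}_2^T\mathbf{J}(\phi_{12})^T\mathbf{q}_1=\cos(\tri\tau+\phi_{12})$ and $\mathbf{q}_1^T\mathbf{J}(\phi_{21})^T\mathbf{q}_2=\cos(\tri\tau-\phi_{21})$, so that both SINRs depend on the beamformers only through $\tri\tau=\tau_2-\tau_1$, as the statement asserts.

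Next I would expand for $\gamma\to\infty$. Using $\frac{2g_{ik}\gamma}{1+2g_{ik}\gamma}=1-\frac{1}{2g_{ik}\gamma}+o(\gamma^{-1})$ and $1-\cos^2=\sin^2$ in Eqt.~\eqref{eqt:rate_bf} yields
\begin{equation*}
\sinr_1=2\gamma\sin^2(\tri\tau+\phi_{12})+\frac{\cos^2(\tri\tau+\phi_{12})}{g_{12}}+o(1),
\end{equation*}
and the analogous expression for $\sinr_2$ with $\tri\tau-\phi_{21}$ and $g_{21}$. For generic $\tri\tau$ the leading coefficient $2\sin^2(\cdot)$ is positive, so each $R_i=\half\log(1+\sinr_i)$ grows like $\half\log\gamma$; dividing $R_1+R_2$ by $\log\gamma$ and letting $\gamma\to\infty$ gives $S_\infty=1$. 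Since $S_\infty=1$, I compute the offset as $L_\infty=\lim_{\gamma\to\infty}(\log\gamma-R_1-R_2)$; the factor $2\gamma$ inside each logarithm supplies the additive constant $R_i\approx\half+\half\log\gamma+\half\log\sin^2(\cdot)$, whence
\begin{equation*}
L_\infty(\tri\tau)=-1-\half\log\sin^2(\tri\tau+\phi_{12})-\half\log\sin^2(\tri\tau-\phi_{21}).
\end{equation*}
Maximizing $L_\infty$ is thus equivalent to maximizing the product $\sin^2(\tri\tau+\phi_{12})\,\sin^2(\tri\tau-\phi_{21})$.

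The last step is this trigonometric maximization, which I expect to be the one genuinely delicate point. Using a product-to-sum identity I would write $\sin(\tri\tau+\phi_{12})\sin(\tri\tau-\phi_{21})=\half\bigl[\cos\bar\phi-\cos(2\tri\tau+\tri\phi)\bigr]$ with $\bar\phi=\phi_{12}+\phi_{21}$ and $\tri\phi=\phi_{12}-\phi_{21}$, so the objective becomes $\frac14(\cos\bar\phi-t)^2$ in the variable $t=\cos(2\tri\tau+\tri\phi)$. As $\tri\tau$ sweeps $[0,\pi]$ the argument $2\tri\tau+\tri\phi$ covers a full period, so $t$ ranges over all of $[-1,1]$; since $(\cos\bar\phi-t)^2$ is convex in $t$ its maximum is attained at an endpoint, and the comparison $(1+\cos\bar\phi)^2-(1-\cos\bar\phi)^2=4\cos\bar\phi$ selects $t=-1$ (i.e.\ $2\tri\tau+\tri\phi=\pi$, giving $\tri\tau=\frac\pi2-\half\tri\phi$) when $\cos\bar\phi\ge0$ and $t=1$ (i.e.\ $\tri\tau=-\half\tri\phi$) when $\cos\bar\phi<0$. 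Substituting the half-angle identities $1\pm\cos\bar\phi=2\cos^2(\bar\phi/2)$ or $2\sin^2(\bar\phi/2)$ turns the optimal product into $\cos^4(\bar\phi/2)$ or $\sin^4(\bar\phi/2)$ respectively, and plugging back into the $L_\infty$ formula yields $-1-\log\cos^2(\half\bar\phi)$ and $-1-\log\sin^2(\half\bar\phi)$. The subtlety to flag is that $L_\infty$ is only defined where $S_\infty=1$, i.e.\ where both sines are nonzero; I would note that the maximizing product equals $\cos^4$ or $\sin^4$ of the half angle, which vanishes only in the contradictory regimes $\cos\bar\phi=-1$ or $\cos\bar\phi=1$, so the optimum always lies strictly inside the admissible regime and the expression for $\max_{\tri\tau}L_\infty$ is valid.
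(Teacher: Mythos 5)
Your proof is correct and arrives at exactly the paper's intermediate expression $L_{\infty}(\tri \tau)=-1-\half \log \sin^2(\tri \tau+\phi_{12})-\half \log \sin^2(\tri \tau-\phi_{21})$ and the same optimizers, but by a genuinely different route. The paper works from the log-det rate \eqref{eqt:rate_log_det} and takes determinant limits: it defines the terms $T_{ij}$, uses $\det\left(\mathbf{q}_1\mathbf{q}_1^T+\mathbf{H}_{12}\mathbf{q}_2\mathbf{q}_2^T\mathbf{H}_{12}^T\right)=g_{12}\sin^2(\phi_{12}+\tri \tau)$, and reads off $S_{\infty}=1$ and $L_{\infty}$ from $\log \gamma-\half\sum_{i,j}\log T_{ij}$. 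You instead expand the closed-form SINR \eqref{eqt:rate_bf}, obtaining $\sinr_1=2\gamma \sin^2(\tri \tau+\phi_{12})+\cos^2(\tri \tau+\phi_{12})/g_{12}+o(1)$, which is a correct expansion and produces the same additive constants (the $\half$ from the factor $2\gamma$ with base-2 logarithms). Your route makes the $o(1)$ bookkeeping explicit and exposes the subleading interference-leakage term, while the paper's determinant route extends more directly to general (non--rank-one) covariances. In the final trigonometric step your treatment is tighter than the paper's: where the appendix passes through a chain of loosely labeled equivalences, you reduce to the convex function $(\cos\bar{\phi}-t)^2$ of $t=\cos(2\tri \tau+\tri \phi)\in[-1,1]$, maximize at an endpoint, compare endpoints via the difference $4\cos\bar{\phi}$, verify that $t$ indeed sweeps all of $[-1,1]$ as $\tri \tau$ ranges over $[0,\pi]$, and flag the degenerate channels where a sine vanishes and $S_{\infty}$ drops --- a point the paper glosses over.

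One caution on wording: your sentence ``maximizing $L_{\infty}$ is thus equivalent to maximizing the product'' has the direction backwards --- since $L_{\infty}=-1-\half\log(\mbox{product})$, maximizing the product \emph{minimizes} $L_{\infty}$. What you actually computed (maximizing the sine product) is the minimization of the power offset, i.e.\ the efficiency-optimal choice; this is precisely what the paper's appendix declares (``we \emph{minimize} the high-SNR power offset''), even though the lemma's display is labeled $\max_{\tri \tau} L_{\infty}$. Your final values coincide with the lemma's stated ones, so the slip is inherited from the statement's own min/max inconsistency rather than being a flaw in your argument, but you should state the optimization direction as minimization of $L_{\infty}$ (literally, $\sup_{\tri \tau} L_{\infty}$ is $+\infty$ at the zero-forcing-violating angles you already identified).
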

\begin{proof}
 see Appendix \ref{app:power_offset}.
\end{proof}
 As predicted by Lem. \ref{lem:power_offset}, the high-SNR power offset of rank one strategies are negative as it performs better than the curve $\log(\gamma)$. More details are given in Section \ref{sec:extreme}. 

\section{Results and Discussion}\label{sec:sim}
 In this section, we numerically evaluate an achievable rate region of the 2-user SISO IC in various SNR regime. 
 Given a particular channel realization, the achievable rate region of proper signaling is compared to the achievable rate region of improper 
 signaling. We show in the following that improper signaling provides significant gains in the achievable rate region compared to proper signaling with time sharing, including remarkable improvement in terms of max-min fairness and proportional fairness. For completeness, we include here the definitions of max-min fairness operating point  $\left(R_1^{(mm)},R_2^{(mm)}\right)$:
\begin{equation}
 R_1^{(mm)}=R_2^{(mm)}=\max \min_l R_l^{(mm)}
\end{equation} and the definition of the proportional fairness operating point $(R_1^{(pf)}, R_2^{(pf)})$\footnote{Sometimes proportional fairness point is defined as $r_1 r_2$, without the fall-back of NE point.}
\begin{equation}
 \left(R_1^{(pf)}, R_2^{(pf)} \right)= \argmax_{(r_1,r_2) \in \mathcal{R}} \; \left( r_1- R_1^{NE} \right) \left( r_2- R_2^{NE}\right)
\end{equation} where $\mathcal{R}$ is the achievable rate region of improper signaling \eqref{eqt:rate_region} and  $R_1^{NE}$ and $R_2^{NE}$ describe the rates of transmitter 1 and 2 at the Nash Equilibrium \eqref{eqt:nash}. Intuitively, at the max-min fairness operating point, both users enjoy the same maximum possible rate in the achievable rate region whereas at proportional fairness operating point, the rate point maximizes the product of improvement over the threat point (NE). We observe that with improper signaling instead of proper signaling, the improvement is particularly significant when the channel is asymmetric in the sense that the interference channel towards one player is stronger, e.g. $\mathbb{E} |h_{12}|^2< \mathbb{E} |h_{21}|^2$.
In Fig. \ref{fig:ach_rate_reg_low}-\ref{fig:ach_rate_reg_high}, we show an achievable rate region of two users with  $2\mathbb{E} |h_{12}|^2 = \mathbb{E} |h_{21}|^2$ and system SNR, defined as $P/N$, is 0dB and 10dB respectively.

As illustrated in Fig.\ref{fig:ach_rate_reg_low}-\ref{fig:ach_rate_reg_high}, the Nash Equilibrium (light blue triangle) is severely inefficient and the inefficiency of NE is more pronounced in the high SNR scenario. If time sharing between the NE and the single user points is allowed, we obtain the achievable rate region of proper signaling (plotted as a light blue curve). If time-sharing between the single user points is included also, we may obtain a larger region, depending on the channel fading coefficients (plotted as a dashed-dotted line). To illustrate the improvement of max-min fairness, we can draw a straight line with slope one in the achievable rate region (plotted as a dotted line). The intersection of such line and the boundary of the rate region with improper signaling (plotted as a solid black line) gives the max-min fairness operating point. The max-min fairness point with proper signals is given by the intersection of such line with the boundary of the achievable rate region of proper signals (plotted as a light blue line). The improvement of max-min fairness can be illustrated by the gap between the aforementioned points, marked by a double arrow in Fig.\ref{fig:ach_rate_reg_low}-\ref{fig:ach_rate_reg_high}.

If one selects a point on the boundary of the achievable rate region by improper signaling (solid black line) and creates a rectangle with this point as the vertex and the opposite vertex with the NE, then the point corresponding to the rectangle with the largest area is the proportional fairness operating point.  The improvement of proportional fairness is
illustrated by a double head arrow in Fig.\ref{fig:ach_rate_reg_low}-\ref{fig:ach_rate_reg_high}. The improvement is significant.  
Moreover, the improvement of rate region of improper signaling from proper signaling with time-sharing can be illustrated by the blue shaded area in Fig.\ref{fig:ach_rate_reg_low}-\ref{fig:ach_rate_reg_high}. The improvement is substantial and becomes more significant when the SNR decreases.

As shown in Fig.  \ref{fig:ach_rate_reg_low}-\ref{fig:ach_rate_reg_high}, the Pareto boundary of the rank one improper signaling schemes, in particular, non-ZF schemes and ZF schemes, are plotted in red and blue curve respectively. Note that the achievable rate region of the non-ZF scheme always includes the ZF one because ZF has the restriction of interference nulling. As the SNR increases, the ZF region will approach the non-ZF region as ZF schemes are optimal in high SNR regimes.

\begin{figure}
\begin{center}
\includegraphics[width=\columnwidth,keepaspectratio]{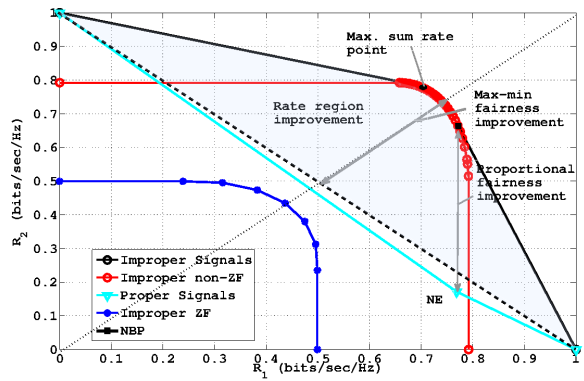}
\caption{The achievable rate region of proper  Gaussian signaling versus improper  Gaussian signaling at low SNR, 0dB. The interference channel from Tx 1 to 2 is twice as strong as the channel from Tx 2 to 1. The achievable rate region improvement by improper signaling is shaded blue. The max-min fairness solution and proportional fairness solution are improved significantly by improper signaling.\label{fig:ach_rate_reg_low}}
\end{center}
\end{figure}

\begin{figure}
\begin{center}
\includegraphics[width=\columnwidth,keepaspectratio]{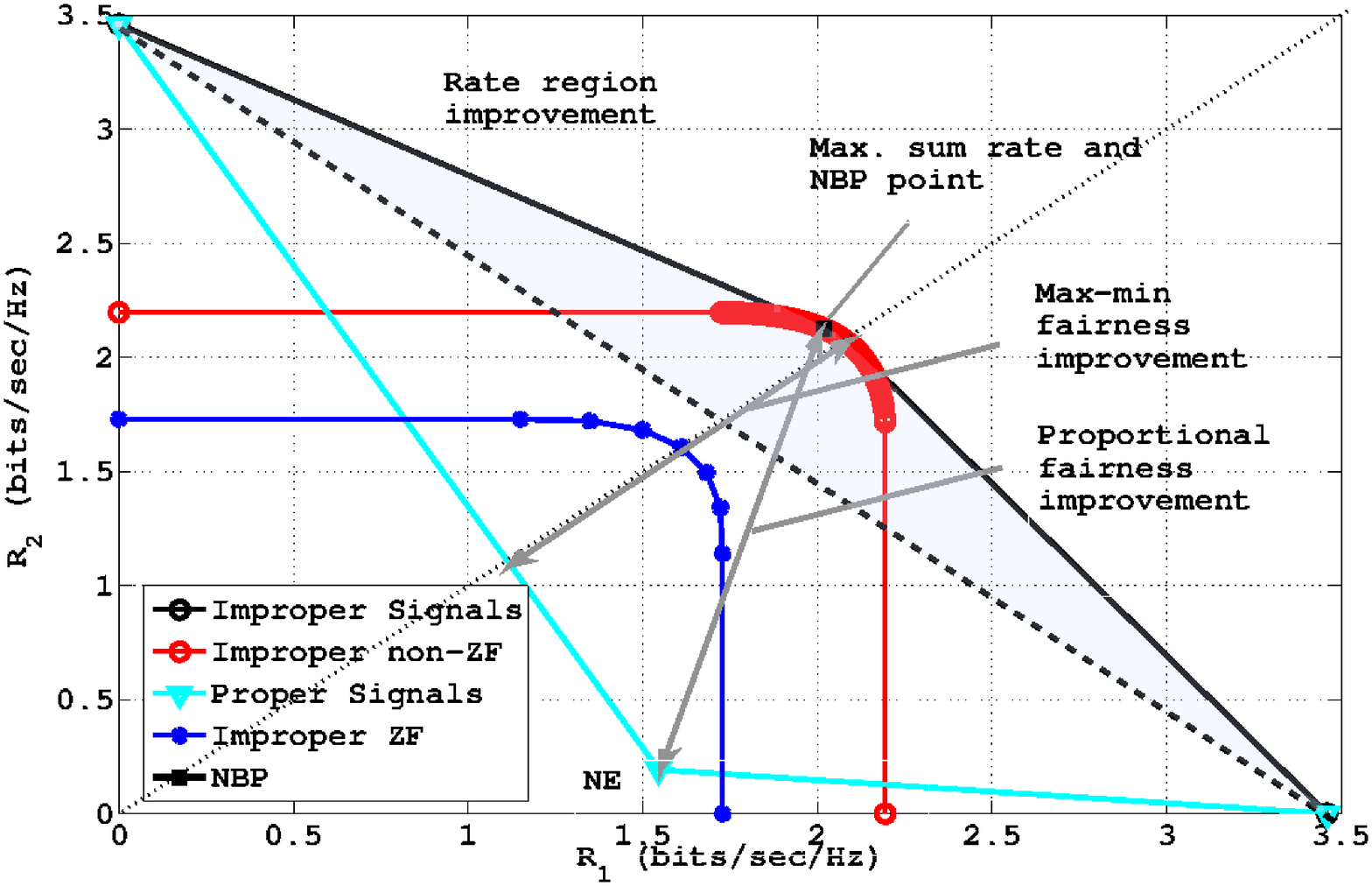}
\caption{The achievable rate region of proper  Gaussian signaling versus improper  Gaussian signaling at medium SNR, 10dB. The interference channel from Tx 1 to 2 is twice as strong as the channel from Tx 2 to 1. The achievable rate region improvement by improper signaling is shaded blue. The max-min fairness solution and proportional fairness solution are improved significantly by improper signaling.\label{fig:ach_rate_reg_high}}
\end{center}
\end{figure}

\subsection{Extreme SNR regimes}\label{sec:extreme}
In Fig. \ref{fig:rate_snr}, the maximum achievable rates of various strategies are plotted over the system SNR. It is encouraging to see
that rank one strategies, despite its simple design, achieve a rate very closed to the general improper signals. In Fig. \ref{fig:rate_snr}, the two curves
are overlapping. Comparing the performance
between the ZF strategies and the proper signaling, we observe that the proper signaling achieves a better sum rate
than ZF strategies in the noise-limited regime whereas ZF strategies achieve a better sum rate than the proper signaling in the interference-limited regime. This observation agrees with the performance of ZF strategies and proper-signaling (the so called maximum-ratio-transmission) in
the MIMO-IC.

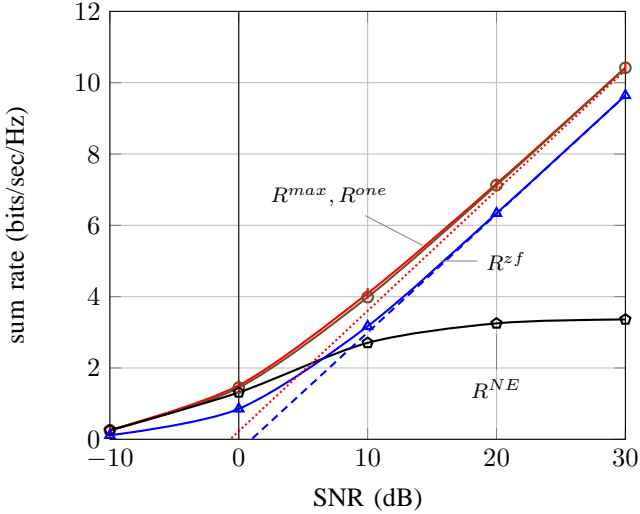
\begin{figure}
\begin{center}

\begin{tikzpicture}[scale=1]
\begin{axis}[xmin=-10,xmax=30,ymin=0,ymax=12,ylabel={sum rate (bits/sec/Hz)},xlabel={SNR (dB)}, legend style={
font=\footnotesize,
}, grid=major]
\tikzstyle{every pin}=[
font=\footnotesize]


\addplot[black] coordinates{
(0,14)
(0,0)
};

\addplot+[smooth,mark=|, thick] coordinates {
(-10, 0.2592)
(0,1.5124)
(10,4.0919)
(20,7.1670)
(30,10.4227)
(40, 13.7387)
}; 
\node[pin=135:{$R^{max},R^{one}$}] at (axis cs:15,5.3) {}; 

\addplot+[smooth, mark=o, thick] coordinates {
(-10, 0.2554)
(0,1.4487)
(10,3.9862)
(20,7.1259)
(30,10.4178)
(40, 13.7387)
}; 

\addplot+[smooth, mark=pentagon, thick] coordinates {
(-10,  0.2523)
(0,1.3146)
(10,2.7047)
(20,3.2505)
(30,3.3616)
(40, 3.3502)
}; 
\node[pin=below:{$R^{NE}$}] at (axis cs:20,3.2) {};     

\addplot+[smooth, mark=triangle, thick] coordinates {
(-10, 0.1129)
(0,0.8546 )
(10,3.1710)
(20, 6.3379)
(30,9.6433)
(40,12.96)
}; 
\node[pin=right:{$R^{zf}$}] at (axis cs:15,5) {}; 

\addplot[thick, densely dashed,blue] coordinates{
(40, 12.96) 
(0.9874,0)
}; 

%
\addplot[thick, densely dotted, red] coordinates{
(40, 13.73)  
(-0.6778,0)
}; 
\end{axis}
\end{tikzpicture}
\caption{The achievable rate of proper Gaussian signaling versus improper  Gaussian signaling at various SNR. The maximum sum rate of improper signaling and rank one improper strategies are plotted with circles and pluses and are overlapping. The dotted line is the asymptote at SNR 30dB which extends and intercept with the x-axis at the high-SNR power offset. In the case of improper signaling, the high-SNR power offset is negative, indicating that it performs better than the reference curve. For ZF strategies, the curve is plotted with triangles and the dashed line is the asymptote at SNR 30dB. The rate achieved by NE is plotted in pentagons and it saturates in high SNR.\label{fig:rate_snr}}

\end{center}
\end{figure}

\section{Conclusion}
We study the impact of achievable rate regions when the assumption of proper Gaussian symbols is relaxed, allowing
the real and imaginary parts of the complex Gaussian symbols to be correlated and with different power. We first explore the
achievable rate region with non-cooperative strategies. We prove that even with improper signaling, one NE remains to be attained
by proper signals which leads to low system efficiency. Then we investigate cooperative strategies by employing improper signals.
We characterize the achievable rate region of improper rank one signals with a single real-valued parameter. In the scope of
rank one strategies, we distinguish two sub-region: non-ZF strategies and ZF strategies. We derive the closed-form solution to 
the Pareto boundary of the non-ZF strategies and the maximum sum rate point, the maxmin fair solution with the ZF strategies. 
Analysis on the extreme SNR regimes shows that proper signals maximize the wide-band slope of spectral efficiency whereas improper signals
maximize the high-SNR power offset.

\appendices 
\section{Proof of Lemma \ref{lem:nash}}\label{app:NE}

Given transmit covariance matrix $\mathbf{Q}_k$ from player $k$,  player $i$ maximizes its rate  in  \eqref{eqt:rate_log_det}, 
$R_i(\mathbf{Q}_i,\mathbf{Q}_j)=\half \logdet \left( \mathbf{Q}_i + \mathbf{A} \right)- \half \logdet \mathbf{A}$ 
where $\mathbf{A}= \frac{N}{2}\mathbf{I} + g_{ik} \mathbf{J}(\phi_{ik}) \mathbf{Q}_k \mathbf{J}(\phi_{ik})^T$. 
We define the best-response function $BR_i(\mathbf{Q}_j)$ of player $i$, taking parameter of the transmit strategy $\mathbf{Q}_j$ of player $j$ and returning the 
best-response strategy of player $i$ such that $R_i$ is maximized:
\begin{equation}
 BR_i(\mathbf{Q}_j)= \argmax_{\mathbf{Q}_i} R_i(\mathbf{Q}_i,\mathbf{Q}_j).
\end{equation}
We can see that if  $\mathbf{Q}_j=\frac{P}{2} \mathbf{I}$, then the matrix $\mathbf{A}$ in $R_i$ is diagonal and the best-response strategy $\mathbf{Q}_i=\frac{P}{2} \mathbf{I}$.  In other words, proper Gaussian signaling is an equilibrium point.

\section{Proof of Theorem \ref{thm:non_zf}}\label{app:sinr_sinr}
Note that the product $\mathbf{q}_k^T\mathbf{J}(\phi_{ik})^T \mathbf{q}_i= \cos(\phi_{ik}+\tau_k-\tau_i) $
is a function of $\tau_k-\tau_i$. Define
$ \tri \tau= \tau_2-\tau_1, -\pi \leq \triangle \tau \leq \pi
$. The achievable rates are functions of $\tri \tau$ only, regardless of the values of individual transmit beamforming vector $\mathbf{q}_i$:
\begin{equation*}\left\{
\begin{aligned}
R_1(\tri \tau)&= \half \log \left( 1+ 2 \gamma - \frac{ 4 g_{12} \gamma^2 \cos^2(\phi_{12}+ \tri \tau)}{1+ 2 g_{12} \gamma}\right)\\
R_2(\tri \tau)&= \half \log \left(1+ 2 \gamma - \frac{ 4 g_{21} \gamma^2 \cos^2(\phi_{21}-\tri \tau)}{1+ 2 g_{21} \gamma} \right)
\end{aligned} \right.
\end{equation*}

Since $R_1(\tri \tau), R_2(\tri \tau)$ relates to $\tri \tau$ only through the function $\cos(.)$ and the fact that
$\cos(\theta + \pi)^2= (-\cos(\theta))^2=\cos^2(\theta)$, we can reduce the range of $\tri \tau$ to
$0 \leq \tri \tau \leq \pi$.
Hence, the achievable rate region of non-ZF beamforming strategies is defined as
$ \mathcal{R}^{nzf}=\bigcup_{0 \leq \tri \tau \leq \pi} \left( R_1(\tri \tau), R_2(\tri \tau)\right)$.

Now, to obtain the Pareto boundary characterization of $\mathcal{R}^{nzf}$, we write $\sinr_2$ as a function of $\sinr_1$.
The Pareto boundary of $\mathcal{R}^{nzf}$ in \eqref{eqt:ach_region_nzf} can be obtained by maximizing $R_2(\tri \tau)$ subject to a given value of $R_1(\tri \tau)$.
We begin by writing
\begin{equation}\label{eqt:sinr_nzf}
 \sinr_1(\tri \tau)=  2 \gamma - \frac{ 4 g_{12} \gamma^2 \cos^2(\phi_{12}+ \tri \tau)}{1+ 2 g_{12} \gamma}
\end{equation}
and obtain
\begin{equation}
 \tri \tau= \cos^{-1} \left( \sqrt{\frac{\left( 2 \gamma-\sinr_1\right) \left( 1+ 2 g_{12} \gamma\right)}{4 g_{12} \gamma^2}  }\right)- \phi_{12} + b \pi
\end{equation}
where the parameter $b$ takes either value 0 or 1. Denote $D=\frac{\left( 2 \gamma -\sinr_1\right) \left( 1+ 2 g_{12} \gamma\right)}{4 g_{12} \gamma^2}$ and
 $\tau'=\cos^{-1} \left(\sqrt{ D } \right)+ b \pi$. We substitute $\tri \tau= \tau' - \phi_{12}$ into $\sinr_2$ and the SINR of user $2$ can be written as
\begin{equation}\label{eqt:sinr_k}
\begin{aligned}
 \sinr_2&=2 \gamma- \frac{ 4 g_{21} \gamma^2 \cos^2\left(\phi_{21}-\tri \tau \right) }{1+ 2 g_{21} \gamma} \\
&=2 \gamma- \frac{ 4 g_{21} \gamma^2 \cos^2\left(\phi_{21}+\phi_{12} - \tau' \right) }{1+ 2 g_{21} \gamma}.
\end{aligned}
\end{equation}
Now we denote the sum of the channel phase rotation by
$\bar{\phi}= \phi_{21}+\phi_{12}
$ and use the following trigonometry properties:
$ \cos(x)= \cos(-x)$ and $\cos(\bar{ \phi}-\tau')= \cos(\bar{\phi})\cos(\tau')+\sin(\bar{\phi})\sin(\tau')$.
Since $\cos(\phi)=\sqrt{D}$, we have $\sin(\phi)=(-1)^b \sqrt{1-D}$.
Results follow by substituting trigonometric identities to \eqref{eqt:sinr_k}.

\section{Proof of Theorem \ref{thm:zf}}\label{app:lem_zf_rate}
\subsection{the maximum sum rate point}
Note that $\cos(a + \pi/2)=-\sin(a)$. The $\sinr$ pair can be written as the following:
$\sinr_1(\tri \tau)=\sin^2(\tri \tau +\phi_{12}) \gamma$ and $\sinr_2(\tri \tau)=\sin^2(\tri \tau - \phi_{21}) \gamma$.
Note that the sum rate is maximized if the product $f(\tri \tau)=(1+\sinr_1(\tri \tau))(1+ \sinr_2(\tri \tau))$ is maximized. 
Denote the derivative of $\sinr_i$ with respect to $\tri \tau$ as $\sinr'_i(\tri \tau)$. The KKT condition of the above maximization problem gives
\begin{equation}\label{eqt:kkt}
\begin{aligned}
f'(\tri \tau)&=\sinr'_1(\tri \tau) \left(1+ \sinr_2(\tri \tau) \right)\\
& \hspace{0.5cm} +\sinr'_2(\tri \tau) \left(1+ \sinr_1(\tri \tau) \right)=0.
\end{aligned}
\end{equation}
Computing the derivative of SINR's, we have 
$\sinr'_1(\tri \tau)= \sin(2\tri \tau +2\phi_{12}) \gamma$
and $\sinr'_2(\tri \tau)=\sin(2\tri \tau -2\phi_{21}) \gamma$.
\begin{figure*}[!t]
\begin{equation}\label{eqt:kkt_long}
 \begin{aligned}
   & \left( \sin(2\tri \tau +2\phi_{12})+ \sin(2\tri \tau -2\phi_{21}) \right) \gamma\\
   & + 2 \gamma^2 \left( \sin(\tri \tau -\phi_{21})\sin(\tri \tau + \phi_{12}) \right) 
\left(\cos(\tri \tau +\phi_{12})\sin(\tri \tau -\phi_{21})+\cos(\tri \tau -\phi_{21})\sin(\tri \tau + \phi_{12})  \right) =0\\
& \overset{(a)}{\Rightarrow} \left( \sin(2\tri \tau +2\phi_{12})+ \sin(2\tri \tau -2\phi_{21}) \right) \gamma
    + 2 \gamma^2 \left( \sin(\tri \tau -\phi_{21})\sin(\tri \tau + \phi_{12}) \right) 
\sin(2 \tri \tau + \tri \phi) =0 \\
& \overset{(b)}{\Rightarrow} 2 \left( \sin(2\tri \tau + \tri \phi) \cos(\bar{\phi}) \right) \gamma
    +  \gamma^2 \left( \cos(\bar{\phi})-\cos(2 \tri \tau + \tri \phi) \right) 
\sin(2 \tri \tau + \tri \phi) =0 \\
& \Rightarrow  \sin(2 \tri \tau + \tri \phi) \left( (2+\gamma) \cos(\bar{\phi}) - \gamma\cos(2 \tri \tau + \tri \phi) \right)=0, \pi\\
& \Rightarrow \tri \tau= -\half \tri \phi, \frac{\pi}{2}-\half \tri \phi \mbox{ or } \frac{1}{2} \cos^{-1}\left( \frac{2+\gamma}{\gamma} \cos(\bar{\phi})\right)- \frac{1}{2} \tri \phi
 \end{aligned}
\end{equation}
\hrule
\end{figure*}
With the fact that $\sin(2 \theta)=2 \sin(\theta) \cos(\theta)$, the KKT condition is then given by \eqref{eqt:kkt_long},
where $(a)$ is due to the fact that 
$ \sin(\theta_a)\cos(\theta_b)+\cos(\theta_a)\sin(\theta_b)=\sin(\theta_a+\theta_b)$ and $\tri \phi=\phi_{12}-\phi_{21}$. The transition $(b)$ is due to the fact that 
$  \sin(\theta_a)\sin(\theta_b)=\half \left(\cos(\theta_a-\theta_b)-\cos(\theta_a+\theta_b) \right)$ and
$\sin(\theta_a)+\sin(\theta_b)= 2 \sin\left( \frac{\theta_a+ \theta_b}{2}\right) \cos\left(\frac{\theta_a- \theta_b}{2} \right)$.

The second derivative of optimization objective $f(\tri \tau)$ is
\begin{equation}\label{eqt:sec_der}
\begin{aligned}
 f''(\tri \tau)&= \sinr''_1 \left( 1+ \sinr_2 \right) \\
& \hspace{0.5cm} + 2 \sinr'_1 \sinr'_2 \left( 1+ \sinr_1 \right) \sinr'_2
\end{aligned}
\end{equation} where $\sinr''_1$ and $\sinr''_2$ are the second derivative of $\sinr_1$ and $\sinr_2$ with respect to $\tri \tau$ respectively.
The argument $\tri \tau$ in the above equation is removed for the ease of notation. To check whether an extrema is a local maximum, one needs to substitute the solutions in \eqref{eqt:kkt_long} into \eqref{eqt:sec_der} and derive the conditions of which the second derivative is less than zero.
\begin{equation}
\begin{aligned}
& f''(\tri \tau)\leq 0 \mbox{ if }\\ 
& \left\{ \begin{array}{l}
                                           \tri \tau = -\half \tri \phi \mbox{ and } \cos(\bar{\phi})\leq \frac{\gamma}{2+\gamma}\\
\tri \tau = \frac{\pi}{2}-\half \tri \phi \mbox{ and } \cos(\bar{\phi})\geq -\frac{\gamma}{2+\gamma}\\
\tri \tau = \frac{1}{2} \cos^{-1}\left( \frac{2+\gamma}{\gamma} \cos(\bar{\phi})\right)- \frac{1}{2} \tri \phi \\
\hspace{2cm} \mbox{ and } \cos^2(\bar{\phi})\leq \frac{2\gamma^2}{2 \gamma^2 - (\gamma+2)^2}.
                                          \end{array}
\right.
\end{aligned}
\end{equation}
However, there may exist channel rotations $\bar{\phi}$ which satisfy more than one of the above conditions. In this case, there are two maximum points and one minimum point. To compute the maximum sum rate and its corresponding solution, one must compute the rates of the two maximum points for comparison.

\subsection{the max-min fairness solution}
For the max-min fair point $\max_{\tri \tau} \min(R_1(\tri \tau),R_2(\tri \tau))$, the optimal solution satisfies $R_1(\tri \tau)=R_2(\tri \tau)$. 
Then the optimal solution satisfies $\sin(\tri \tau + \phi_{12})= \pm \sin(\tri \tau - \phi_{21})$ and therefore $\tri \tau= -\half \tri \phi, \frac{\pi}{2}-\half \tri \phi$. This solution agrees with the max-min solution characterization in \cite{Park2010} when the channel model is in its standard form.

\section{Proof of Lemma \ref{lem:s0}}\label{app:s0}
We rewrite the rate function in Eqt. \eqref{eqt:rate_log_det} as a function of $\gamma, \bar{\mathbf{Q}}_1,\bar{\mathbf{Q}}_2$ where $\bar{\mathbf{Q}}_1=\frac{1}{P} \mathbf{Q}_1$. 
\begin{equation*}
\begin{aligned}
 R_1(\gamma, \bar{\mathbf{Q}}_1, \bar{\mathbf{Q}}_2) &= \half  \logdet \left(\half \mathbf{I}+ \gamma \left(  \bar{\mathbf{Q}}_1 
+ \mathbf{H}_{12} \bar{\mathbf{Q}}_2 \mathbf{H}_{12}^T\right) \right) \\
& -\half  \logdet \left(\half \mathbf{I}+ \gamma  \mathbf{H}_{12} \bar{\mathbf{Q}}_2 \mathbf{H}_{12}^T \right). 
\end{aligned}
\end{equation*}
Computing the derivative of $R_1(\gamma, \bar{\mathbf{Q}}_1, \bar{\mathbf{Q}}_2)$ gives the following,
\begin{equation*}\label{eqt:r1_diff}
 \begin{aligned}
&  \frac{2}{\ln(2)}\frac{\partial R_1(\gamma, \bar{\mathbf{Q}}_1, \bar{\mathbf{Q}}_2)}{\partial \gamma}\\
&=  \tr\left\{ \left(\half \mathbf{I} + \gamma\left( \bar{\mathbf{Q}}_1
+ \mathbf{H}_{12} \bar{\mathbf{Q}}_2 \mathbf{H}_{12}^T\right)\right)^{-1} \left( \bar{\mathbf{Q}}_1  
+ \mathbf{H}_{12} \bar{\mathbf{Q}}_2 \mathbf{H}_{12}^T \right)\right\}\\
& - \tr\left\{  \left(\half \mathbf{I} + \gamma \mathbf{H}_{12} \bar{\mathbf{Q}}_2 \mathbf{H}_{12}^T\right)^{-1} \mathbf{H}_{12} \bar{\mathbf{Q}}_2 \mathbf{H}_{12}^T \right\}.
 \end{aligned}
\end{equation*}
We obtain $\dot{R}_1=\left. \frac{\partial R_1(\gamma, \bar{\mathbf{Q}}_1, \bar{\mathbf{Q}}_2)}{\partial \gamma} \right|_{\gamma=0} = \ln(2) \cdot g_{11}$ which is independent to $\bar{\mathbf{Q}}_1,\bar{\mathbf{Q}}_2$.
Thus, by  Eqt. \eqref{eqt:s0}, $S_0$
is maximized if and only if $-\ddot{R}$ is minimized. In the following, we compute $\ddot{R}=\ddot{R}_1+\ddot{R}_2$ in terms of 
the transmit covariance matrices. It is sufficient to compute $\ddot{R}_1$ as $\ddot{R}_2$ can be computed by exchanging
the indexes. 
\begin{equation*}
\begin{aligned}
& \ddot{R}_1=\left. \frac{\partial^2 R_1(\gamma, \bar{\mathbf{Q}}_1, \bar{\mathbf{Q}}_2)}{\partial \gamma^2} \right|_{\gamma=0}\\
&= -2 \ln 2 \tr \left( \left( \bar{\mathbf{Q}}_1 + g_{12} \mathbf{J}(\phi_{12}) \bar{\mathbf{Q}}_2 \mathbf{J}(\phi_{12})^T \right)^2 \right) \\
& \hspace{3cm} + 2 g_{12}^2 \ln 2 \tr \left( \bar{\mathbf{Q}}_2^2 \right).
\end{aligned}
\end{equation*}
Thus, we have
\begin{eqnarray}\label{eqt:s0_max}
 & & \max_{\bar{\mathbf{Q}}_1,\bar{\mathbf{Q}}_2} S_0\\
\nonumber & \Leftrightarrow & \min_{\bar{\mathbf{Q}}_1,\bar{\mathbf{Q}}_2} - \ddot{R}_1 -\ddot{R}_2\\
\nonumber & \Leftrightarrow & \min_{\bar{\mathbf{Q}}_1,\bar{\mathbf{Q}}_2} \tr(\bar{\mathbf{Q}}_1^2) +\tr(\bar{\mathbf{Q}}_1^2)
 + 2 g_{12} \tr \left( \bar{\mathbf{Q}}_1 \mathbf{J}(\phi_{12}) \bar{\mathbf{Q}}_2 \mathbf{J}(\phi_{12})^T\right)\\
\nonumber & &\hspace{1cm} + 2 g_{21} \tr \left( \bar{\mathbf{Q}}_2 \mathbf{J}(\phi_{21}) \bar{\mathbf{Q}}_1 \mathbf{J}(\phi_{21})^T\right)
\end{eqnarray}
Let $\bar{\mathbf{Q}}_i= \mathbf{U}_i \diag(\lambda_i,1-\lambda_i) \mathbf{U}_i^T$ where $\mathbf{U}_i$ are orthogonal matrices.
Let $\mathbf{J}(\omega_1)=\mathbf{U}_1^T\mathbf{J}(\phi_{12}) \mathbf{U}_2$ and $\mathbf{J}(\omega_2)=\mathbf{U}_2^T\mathbf{J}(\phi_{21}) \mathbf{U}_1$
The maximization of spectral efficiency in Eqt. \eqref{eqt:s0_max} can be simplified to
\begin{equation}\label{eqt:so_max_1}
\begin{aligned}
& \min_{\omega_1,\omega_2,\lambda_1,\lambda_2} \sum_{i=1,2} \left( \lambda_i^2 + (1-\lambda_i)^2\right) \\
& + \sum_{i=1,2} 2 g_{ik} (2 \lambda_i \lambda_k -\lambda_i - \lambda_k +1) \cos(\omega_i)^2 \\
& + \sum_{i=1,2} 2 g_{ik} (\lambda_i + \lambda_k - 2 \lambda_i \lambda_k) \sin(\omega_i)^2
\end{aligned}
\end{equation} where $k=1$ if  $i=2$ and $k=2$ if $i=1$.
Setting the derivative of Eqt. \eqref{eqt:so_max_1} with respect to $\lambda_i$ to zero, we obtain the optimal solution, for $i=1,2$,
\begin{equation}\label{eqt:lambda}
\begin{aligned}
\lambda_i& = \half + \half g_{ik} (1-2 \lambda_k) \cos(2 \omega_i)\\
& \hspace{1cm} + \half g_{ki} (1- 2 \lambda_k) \cos(2 \omega_k).
\end{aligned}
\end{equation} 
Note that $\lambda_i$ is a function of $\lambda_k, \omega_i,\omega_k$. Surprisingly, substitute $\lambda_k$ into $\lambda_i$ and write
Eqt. \eqref{eqt:lambda} as a fixed point equation of $\lambda_i$ gives a unique solution independent of $\omega_1,\omega_2$:
$\lambda_i=\half$. $\lambda_k$ is obtained by plugging $\lambda_i=\half$ into Eqt. \eqref{eqt:lambda}:
\begin{equation}
 \lambda_1=\lambda_2=\half.
\end{equation} Hence, proper signals maximize the slope of spectral efficiency in the noise-limited regime.

\section{Proof of Lemma \ref{lem:power_offset}}\label{app:power_offset}
We define the following two terms
\begin{eqnarray}
  T_{11}&=&\lim_{\gamma \rightarrow \infty} \det \left( \half \mathbf{I} + \gamma \left( \mathbf{q}_1\mathbf{q}_1^T + \mathbf{H}_{12} \mathbf{q}_2 \mathbf{q}_2^T \mathbf{H}_{12}^T \right) \right)\\
\nonumber &=&\lim_{\gamma \rightarrow \infty} \gamma^2 \det \left( \mathbf{q}_1\mathbf{q}_1^T + \mathbf{H}_{12} \mathbf{q}_2 \mathbf{q}_2^T \mathbf{H}_{12}^T \right)\\
\nonumber T_{12}&=& \lim_{\gamma \rightarrow \infty} \det \left( \half \mathbf{I} + \gamma  \mathbf{H}_{12} \mathbf{q}_2 \mathbf{q}_2^T \mathbf{H}_{12}^T  \right)
=\lim_{\gamma \rightarrow \infty} \half \gamma g_{12}.
\end{eqnarray}
By exchanging the indices 1 and 2, we obtain $T_{21}$ and $T_{22}$ analogously.
By definition, 
\begin{eqnarray}
 \nonumber S_{\infty}(\mathbf{Q}) &=&\lim_{\gamma \rightarrow \infty} \frac{R_1(\gamma, \mathbf{q}_1 \mathbf{q}_1^T,\mathbf{q}_2 \mathbf{q}_2^T)+ R_2(\gamma, \mathbf{q}_1 \mathbf{q}_1^T,\mathbf{q}_2 \mathbf{q}_2^T)}{\log \gamma}\\
\nonumber &=& \lim_{\gamma \rightarrow \infty} \frac{1}{\log \gamma} \big(\half \sum_{i,j=1}^2 \log (T_{ij})\big)\\
&=& 1.
\end{eqnarray}
The high-SNR power offset for transmit beamforming vector $\mathbf{q}_1$ is
\begin{equation*}
 \begin{aligned}
&  L_{\infty}(\mathbf{Q})= \lim_{\gamma \rightarrow \infty} \left( \log \gamma - \frac{R_1(\mathbf{Q})+ R_2(\mathbf{Q})}{S_{\infty}}\right)\\
&=\lim_{\gamma \rightarrow \infty} \bigg( \log \gamma -\half \sum_{i,j=1}^2 \log (T_{ij}) \bigg)\\
&= -1 - \half \log \det\left( \half \mathbf{I} + \gamma  \mathbf{H}_{12} \mathbf{q}_2 \mathbf{q}_2^T \mathbf{H}_{12}^T  \right) + \half \log g_{12} \\
&  -\half \log \det\left( \half \mathbf{I} + \gamma  \mathbf{H}_{21} \mathbf{q}_1 \mathbf{q}_1^T \mathbf{H}_{21}^T  \right) + \half \log g_{21}\\
&\overset{(a)}{=} -1 - \half \log \sin^2 (\phi_{12}+ \tri \tau)- \half \log \sin^2 (\phi_{21}- \tri \tau).
 \end{aligned}
\end{equation*}
Recall that the transmit beamforming vectors are $\mathbf{q}_i=[\cos(\tau_i), \sin(\tau_i)]^T.$ The equation $(a)$ is due to the following. Note that $\mathbf{J}(\phi_{12})\mathbf{q}_2=[\cos(\phi_{12}+\tau_2), \sin(\phi_{12}+\tau_2)]^T$. Straightforward
computation gives $\det \left(\mathbf{q}_1 \mathbf{q}_1^T + \mathbf{H}_{12} \mathbf{q}_2 \mathbf{q}_2^T \mathbf{H}_{12}^T \right)=g_{12} \sin^2(\phi_{12}+\tau_2-\tau_1)$. Denote $\tri \tau=\tau_2-\tau_1$. 
Since the achievable rate of improper signaling performs better than the reference curve $\log \gamma$, the high-SNR power offset is negative. To improve efficiency, we \emph{minimize} the high-SNR power offset.
\begin{equation*}\label{eqt:power_offset_fin}
 \begin{aligned}
 & \min_{\tri \tau} L_{\infty}(\tri \tau) \\
&= \min_{\tri \tau}-1 - \half \log \sin^2 (\phi_{12}+ \tri \tau)- \half \log \sin^2 (\phi_{21}- \tri \tau)\\
& \Leftrightarrow  \min_{\tri \tau} |\sin(\phi_{12}+\tri \tau) \sin(\phi_{21}-\tri \tau) |\\
& \Leftrightarrow \min_{\tri \tau} |\cos(\phi_{12}-\phi_{21}+2 \tri \tau)-\cos(\phi_{12}+\phi_{21})|\\
& \Leftrightarrow \tri \tau = \left\{ \begin{array}{cc}
                     -\half \tri \phi & \mbox{ if } \cos(\bar{\phi})<0\\
		     \frac{\pi}{2} -\half \tri \phi & \mbox{ if } \cos(\bar{\phi})\geq 0
                    \end{array}
 \right.
 \end{aligned}
\end{equation*} where $\bar{\phi}=\phi_{12}+\phi_{21}$ and the optimized high-SNR power offset is
\begin{equation*}
 \max_{\tri \tau} L_{\infty}= \left\{ \begin{array}{cc}
                     -1 -\log \sin^2 (\half \bar{\phi}) & \mbox{ if } \cos(\bar{\phi})<0\\
		     -1 -\log \cos^2 (\half \bar{\phi}) & \mbox{ if } \cos(\bar{\phi})\geq 0.
                    \end{array}
 \right.
\end{equation*}

\bibliographystyle{IEEEbib}
\bibliography{bib6}

\section*{Acknowledgment}
The authors would like to thank Professor Erik Larsson for his insightful comments and discussion.

\begin{IEEEbiography}[{\includegraphics[width=1in,height=1.25in,clip,keepaspectratio]{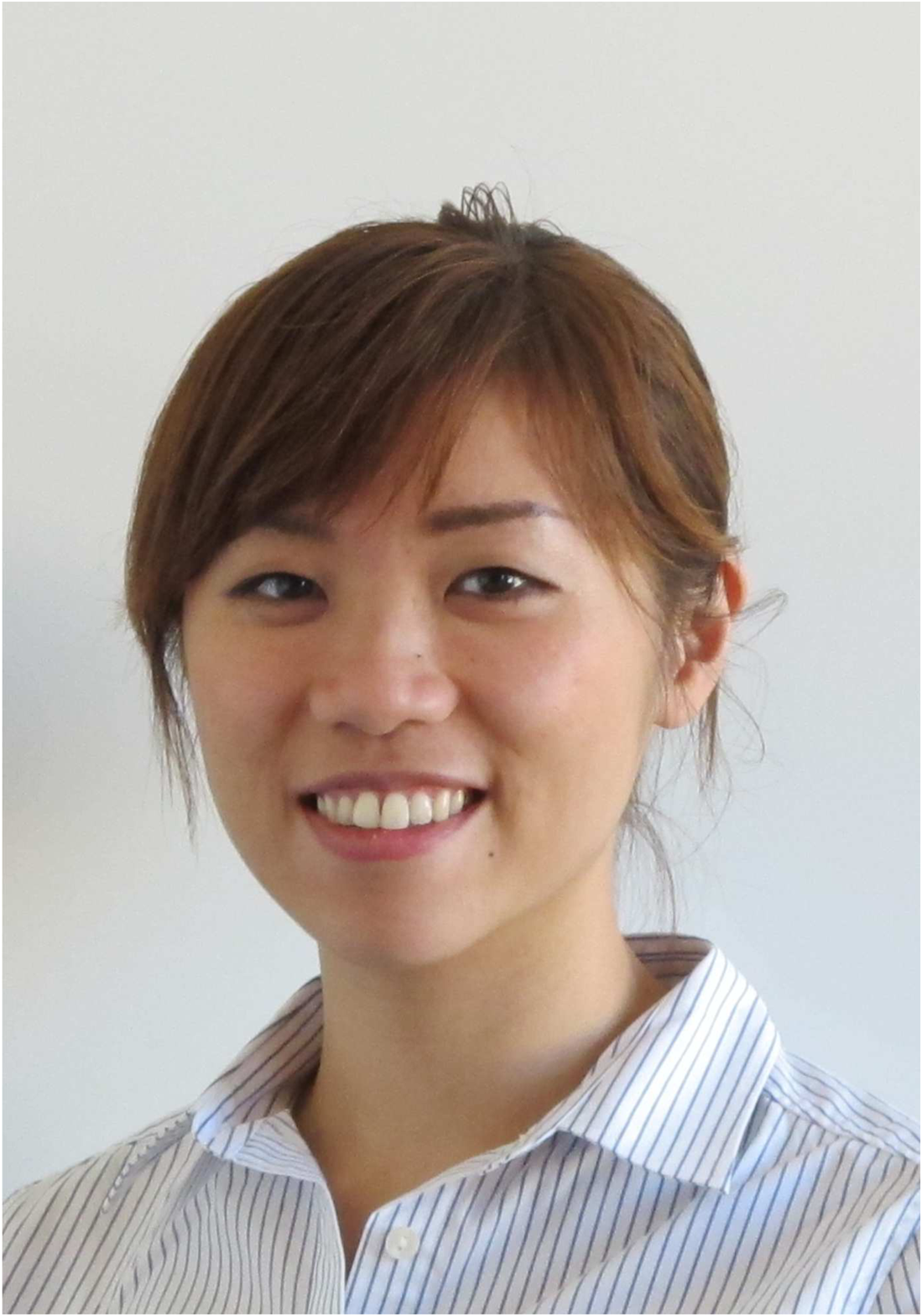}}]{Zuleita Ho}
Zuleita K. M. Ho (S'06-M'10) received her Ph.D. in wireless communication from EURECOM and Telecom Paris, France in 2010. In 2005 and 2007, She received her Bachelor and Master in Philosophy degree in Electronic Engineering (Wireless Communication) from Hong Kong University of Science and Technology (HKUST). From 2003 to 2004, she studied as a visiting student in Massachussets Institute of Technology (MIT) supported by the HSBC scholarship for Overseas Studies. From 2011 till now, she has joined the Communications Laboratory, chaired by Prof. Eduard Jorswieck, in the Department of Electrical Engineering and Information Technology at the Dresden University of Technology.
Zuleita enrolled into HKUST in 2002 through the Early Admission Scheme for gifted students. In 2003, she received the HSBC scholarship for Overseas Studies and visited MIT for 1 year. In 2007, she received one of the most prestigious scholarships in Hong Kong, The Croucher Foundation Scholarship , which supports her doctorate education in France. Other scholarships received include Sumida and Ichiro Yawata Foundation (2004, 2006), The Hong Kong Electric Co Ltd Scholarship (2004) and The IEE Outstanding Student Award (2004). 
\end{IEEEbiography}
\vspace*{-2\baselineskip}

\begin{IEEEbiography}[{\includegraphics[width=1in,height=1.25in,clip,keepaspectratio]{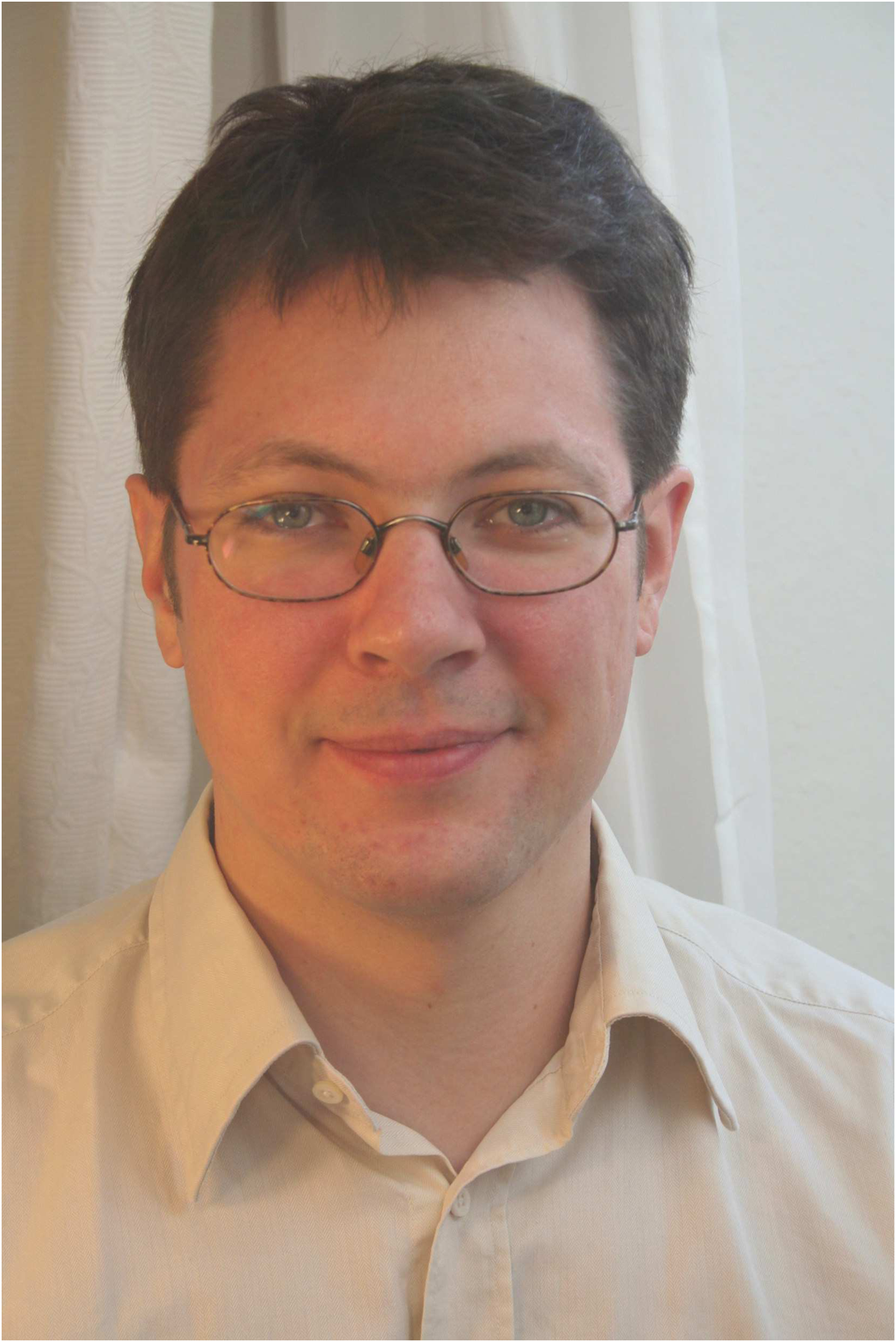}}]{Eduard Jorswieck}
Eduard A. Jorswieck (S’01-M’05-SM’08) received his Diplom-Ingenieur degree and Doktor-Ingenieur (Ph.D.) degree, both in electrical engineering and computer science from the Berlin University of Technology (TUB), Germany, in 2000 and 2004, respectively. He was with the Fraunhofer Institute for Telecommunications, Heinrich-Hertz-Institute (HHI) Berlin, from 2001 to 2006. In 2006, he joined the Signal Processing Department at the Royal Institute of Technology (KTH) as a post-doc and became a Assistant Professor in 2007. Since February 2008, he has been the head of the Chair of Communications Theory and Full Professor at Dresden University of Technology (TUD), Germany. His research interests are within the areas of applied information theory, signal processing and wireless communications. He is senior member of IEEE and elected member of the IEEE SPCOM Technical Committee. From 2008-2011 he served as an Associate Editor and since 2012 as a Senior Associate Editor for IEEE SIGNAL PROCESSING LETTERS. Since 2011 he serves as an Associate Editor for IEEE TRANSACTIONS ON SIGNAL PROCESSING. In 2006, he was co-recipient of the IEEE Signal Processing Society Best Paper Award. 
\end{IEEEbiography}
\vspace*{-2\baselineskip}

\end{document}